\documentclass[journal,onecolumn,11pt]{IEEEtran}

\usepackage[margin=1in]{geometry}
\usepackage{amsmath,amsthm,amssymb,bm}
\usepackage{arydshln}
\usepackage{hyperref,url,xcolor}
\usepackage{multirow}
\newcommand{\josh}[1]{{\color{orange!50!black}[Josh: #1]}}
\newcommand{\gopi}[1]{{\color{blue!50!black}[Gopi: #1]}}

\newcommand{\F}{\mathbb F}
\newcommand{\PF}{\mathbb {PF}}
\newcommand{\cF}{\mathcal F}

\newcommand{\poly}{\operatorname{poly}}
\newcommand{\ceil}[1]{\lceil #1\rceil}
\newcommand{\rank}{\operatorname{rank}}
\newcommand{\Moore}{\operatorname{Moore}}

\newtheorem{theorem}{Theorem}

\newtheorem{proposition}[theorem]{Proposition}
\newtheorem{corollary}[theorem]{Corollary}

\newtheorem{claim}[theorem]{Claim}
\newtheorem{remark}[theorem]{Remark}

\newtheorem*{question*}{Question}

\newcommand{\change}[1]{#1}

\hypersetup{
	colorlinks=true,
	linkcolor=blue,%
	citecolor=blue,
	linktoc=page
}

\title{Improved Constructions and Lower Bounds for Maximally Recoverable Grid Codes}
\author{Joshua~Brakensiek, Manik~Dhar, and Sivakanth~Gopi
\thanks{J. Brakensiek is with the Department of Electrical Engineering and Computer Sciences, University of California, Berkeley. Email: \texttt{josh.brakensiek@berkeley.edu}. This work is supported in part by DMS-2503280.}%
\thanks{M. Dhar is with the Department of Mathematics at the Massachusetts Institute of Technology. Email: \texttt{dmanik@mit.edu}.}%
\thanks{S. Gopi is with Microsoft Research. Email: \texttt{sigopi@microsoft.com}.}}
\date{}

\begin{document}
\maketitle

\begin{abstract}
In this paper, we continue the study of Maximally Recoverable (MR) Grid Codes initiated by Gopalan et al. [SODA 2017]. More precisely, we study codes over an $m \times n$ grid topology with one parity check per row and column of the grid along with $h \ge 1$ global parity checks. Previous works have largely focused on the setting in which $m = n$, where explicit constructions require field size which is exponential in $n$. Motivated by practical applications, we consider the regime in which $m,h$ are constants and $n$ is growing. In this setting, we provide a number of new explicit constructions whose field size is polynomial in $n$. We further complement these results with new field size lower bounds.
\end{abstract}

\section{Introduction}

Inspired by the data-storage needs of modern distributed systems, Gopalan et al.~\cite{Gopalan2016} initiated the study of maximally recoverable (MR) codes for grid-like topologies. In this model, we think of our data storage topology as an $m \times n$ grid. Each of the $mn$ cells of the grid stores as information an element of $\F_q$. To impose some local redundancy, for each of the $n$ columns (with $m$ symbols each), we impose $a \ge 0$ linear parity checks. Likewise, for each of the $m$ rows (with $n$ symbols each), we impose $b \ge 0$ linear parity checks. Finally, in addition to these row and column checks, we impose $h \ge 0$ more ``global'' parity checks. We call such a code a $(m, n, a, b, h)$ grid code.

Given these topological constraints, we seek to seek to have our $(m, n, a, b, h)$ grid code be resilient to various \emph{erasure errors}. In particular, given a subset $E$ of our $m \times n$ grid, can we recover the symbols stored by $E$ using the various parity checks? In general, the answer depends on the precise linear-algebraic properties of the specific parity check matrix chosen in the design of the code. We say that our code is an MR $(m,n,a,b,h)$ grid code if our code is information-theoretically optimal. In other words, \emph{any} pattern $E$ which is correctable by \emph{some} $(m,n,a,b,h)$ grid code is correctable by an MR $(m,n,a,b,h)$ grid code.

The main question we seek to study concerning MR $(m,n,a,b,h)$ grid codes is as follows:

\begin{question*}
What is the minimum field size $q(m,n,a,b,h)$ for which we can construct an MR $(m,n,a,b,h)$ grid code?
\end{question*}

In this paper, we make a number of new contributions to this question in the regime for which $a=b=1$ and $h\ge 1$, particularly when $m$ and $h$ are constants. Practical instantiations of MR grid codes typically have $m\ll n$. For example the f4 storage system from Meta \cite{MLRH14} uses the $(m=3,n=14,a=1,b=4,h=0)$ grid topology. The reason for choosing $m\ll n$ is that typically the grid rows are placed in different zones or geographically separated datacenters (like in f4) and are supposed to protect against zone or datacenter failures. Since such failures are rare events which require large communication for reconstruction, typically only a small $m$ is chosen. 

\subsection{Literature Overview}

Before we describe our results, we first survey the rich literature on MR $(m,n,a,b,h)$ grid codes. A key motivation for Gopalan et al.'s definition of an MR $(m,n,a,b,h)$ grid code is that the regime in which $b = 0$ (or symmetrically $a=0$) captures the topology of MR locally recoverable codes (LRC) (also known as \emph{partial maximal distance separable (PMDS)} codes). In particular, we think of the $mn$ data symbols as being partitioned into $n$ groups of size $m$, such that there are $a$ local parity checks within each group as well as $h$ additional global parity checks. A vast body of literature \cite{chen2007maximally,gopalan2012locality,huang2012erasure,huang2013pyramid,BlaumPS13,gopalan2014explicit,papailiopoulos2014locally,hu2016new,martinez2019universal,gopi2020maximally,cai2021construction,gopi2022improved,dhar2023construction,martinez2025maximally} has studied the structure of MR LRCs, including a wide variety of constructions, some of which have essentially optimal field size. We defer the reader to a recent papers by Cai et al.~\cite{cai2021construction} and by Dhar and Gopi~\cite{dhar2023construction} for a more thorough overview of the literature.

Another well-studied regime is the case in which $h = 0$. In that case, MR $(m,n,a,b,0)$ grid codes are more commonly referred to as MR $(m,n,a,b)$ tensor codes, because the generator matrix of a $(m,n,a,b)$ tensor code is the tensor (or Kronecker) product of two smaller generator matrices (corresponding to the row and column codes). This topology (although not MR) has been used in practice by Meta~\cite{MLRH14}. The case $a=1$ is the most well-studied, especially in the connection to \emph{higher order MDS codes},  \cite{shivakrishna2018Maximally,kong2021new,bgm2021mds,roth2021higher,holzbaur_correctable_2021,athi2023Structurea,bgm2022,brakensiek2024Generalized} and in particular has been found to be closely linked to the construction of optimal list-decodable codes \cite{roth2021higher,bgm2022,guo2023randomly,alrabiah2023randomly,alrabiah2023ag,brakensiek2025ag,AGGLZ25}. See Brakensiek et al.~\cite{brakensiek2023improved} for an overview of field size bounds for higher order MDS codes. In general for $a, b \ge 2$ (and $h = 0$) the situation is poorly understood. In fact, we even lack an efficient description of which patterns $E$ can be recovered by an MR $(m,n,2,2)$ grid code. However, Brakensiek et al.~\cite{brakensiek2024rigidity} recently proved that any such characterization is equivalent to characterizing the which graphs are \emph{bipartitely rigid}~\cite{kalai2015Bipartite}. See Brakensiek et al.~\cite{brakensiek2024rigidity} and references therein for a more detailed discussion.

Unlike the two previous large bodies of work, much less study has been put into MR $(m,n,a,b,h)$ grid codes when all three parameters $a,b,h$ are at least one. Gopalan et al.~\cite{Gopalan2016} gave a description of the correctable patterns of an $(m,n,1,1,h)$ grid code, although the situation was greatly clarified by Holzbaur et al.~\cite{holzbaur_correctable_2021} who showed that any pattern $E$ correctable by an $(m,n,a,b,h)$ grid code can be formed by taking a pattern correctable by an $(m,n,a,b,0)$ grid (a.k.a. tensor) code and erasing at most $h$ additional symbols. Furthermore, they show that if an MR $(m,n,a,b)$ tensor code exists over field size $q$, then an MR $(m,n,a,b,h)$ grid code exists over field size $q^{(m-a)(n-b)}$. In other words, $q(m,n,a,b,h) \le q(m,n,a,b,0)^{(m-a)(n-b)}$. When $a=b=h=1$ an upper bound of $q(m,n,1,1,1) \le 2^{3\max(m,n)}$ is known via a construction due to Kane et al.~\cite{kane2019independence}. For $a=b=1$ and $h \ge 2$, the best explicit construction due to Gopalan et al.~\cite{gopalan2014explicit,Gopalan2016} obtains $q(n,n,1,1,h) \le n^{O(n)}$.

Few results are known in terms of lower bounds.
Gopalan et al.~\cite{Gopalan2016} showed that $q(n,n,1,1,1) \ge n^{\Omega(\log n)}$. A breakthrough by Kane et al.~\cite{kane2019independence} substantially improved this lower bound to $q(n,n,1,1,1) \ge 2^{n/2-2}$ using techniques from representation theory. Coregliano and Jeronimo~\cite{coregliano2022tighter} improved this lower bound to $q(n,n,1,1,1) \ge 2^{0.97n-O(1)}$, albeit under the assumption that the characteristic of the field is two. For $h\ge 2$, Gopalan et al.~\cite{Gopalan2016} (as stated by Kane et al.~\cite{kane2019independence}) states that
\begin{align}
q(m,n,a,b,h) \ge q(\min(m-a+1,n-b+1),\min(m-a+1,n-b+1),1,1,1)^{1/h}.\label{eq:lb}
\end{align} Combining this with the best lower bounds for $h=1$, results in $q(n,n,1,1,h)\ge 2^{\Omega(n/h)}.$ In the regime considered in this paper, i.e., $m,h=O(1)$ and $n$ large, the best known lower bound is only $2^{\Omega(m/h)}$ which doesn't grow with $n$. It is also surprising that existing lower bounds get smaller with increasing $h$, whereas intuitively it is harder to construct codes with larger $h$.

\subsection{Results}

\begin{table}
\begin{center}
\def\arraystretch{1.4}
\begin{tabular}{|c|cccl|}
\hline
& $m$ & $h$ & Bound on $q(m, n, a=1, b=1, h)$ & Reference\\\hline
\multirow{5}{4em}{Upper Bounds}&$\ge 2$ & $\ge 1$ &  $n^{O(n)}$ & Gopalan et al.~\cite{gopalan2014explicit,Gopalan2016}\\
&$\ge 2$ & $1$ & $2^{3n}$ & Kane et al.~\cite{kane2019independence}\\
&$\ge 2$ & $\ge 1$ & $2^{(m-1)(n-1)}$ & Holzbaur et al.~\cite{holzbaur_correctable_2021}\\
\cdashline{2-5}
&$\ge 2$ & $1$ & $n^{m-1}$ &  Theorem~\ref{thm:h1m-1}\\
&$\ge 2$ & $\ge 1$ & $2 (2mn)^{m+h-2}$ & Theorem~\ref{thm:BCH-zero}\\
&$3$ & $1$ & $n \cdot 2^{O(\sqrt{\log n})}$ & Corollary~\ref{cor:m=3}\\
\hline
\multirow{5}{4em}{Lower Bounds}& $\ge 2$ & $\ge 1$ & $\Omega_{m,h}(1)$ & \cite{Gopalan2016,kane2019independence,coregliano2022tighter}\\
\cdashline{2-5}
&$\ge 2$ & $1$ & $n$ & Corollary~\ref{cor:mge2lb}\\
&$\ge 2$ & $\ge 1$ & $n^{\lceil h/2\rceil}$ & Theorem~\ref{thm:gab-lower} (\emph{Gabidulin construction only})\\
&$\ge (h-1)^2$ & $\ge 3$ & $\Omega_h(n^{h-1})$ & Theorem~\ref{thm:lb-general}\\
&$\ge 3$ & $\ge 1$ & $q(m-2, n-h,a=1,b=1,h=1)$ & Theorem~\ref{thm:monotone-h}\\
&$\ge 2$ & $\ge 2$ & $\Omega_{m,h}(n^{\min(\lfloor \sqrt{m-2}\rfloor, h-1)})$ & Corollary~\ref{cor:combine}\\
\hline
\end{tabular}
\end{center}
\caption{A comparison of our bounds on $q(m, n, a=1, b=1, h)$ with prior work. To simplify the presentation, we assume that $m$ and $h$ are constants relative to $n$ and that $n$ is a power of two.}\label{table:results}
\end{table}

As previously mentioned, we investigate in this paper the setting in which $a=b=1$ and $h \ge 1$. Our primary focus is on the regime for which $m$ and $h$ are constants relative to $n$ which is growing. This setting has essentially no results in the literature. In particular, existing bounds only imply a constant lower bound on the field size, and existing constructions have field size which are exponential in $n$.

In this paper, we provide a number of new constructions and lower bounds to show that the correct dependence on $n$ is $\poly(n)$.  We highlight the following results--see Table~\ref{table:results} for a more comprehensive overview. First, we give a simple construction for $h=1$, with $\poly(n)$ field size.

\begin{theorem}[Informal version of \change{Theorem~\ref{thm:h1m-1}}]
$q(m,n,1,1,1) \le n^{m-1}$.
\end{theorem}

The key idea is encoding the binary representation of the coordinates of each symbol. By combining BCH codes and Gabidulin codes, we are able to extend this construction to any $h \ge 2$, beating the previous best bound of $2^{(m-1)(n-1)}$ due to Holzbaur et al.~\cite{holzbaur_correctable_2021}.

\begin{theorem}[Informal version of Theorem~\ref{thm:BCH-zero}]
$q(m,n,1,1,h) \le O(mn)^{m+h-2}$.
\end{theorem}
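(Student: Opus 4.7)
The plan is to bootstrap the $h=1$ construction from Theorem~\ref{thm:h1m-1} using a BCH/Gabidulin-style Frobenius lift to handle arbitrary $h\ge 1$ global parities. Let $q_0$ be a prime power with $q_0\ge 2mn$, and set $q=q_0^{m+h-2}$. The $m+n$ row and column parity checks of the grid code will be the standard sum-to-zero constraints inherited from the MR tensor code; the substantive part of the construction is the design of the $h$ global parity checks. Specifically, let $\gamma_{i,j}\in \F_{q_0^{m-1}}\subseteq \F_q$ be position-dependent coefficients of the form arising in Theorem~\ref{thm:h1m-1} (encoding the indices $(i,j)$ into an $(m-1)$-dimensional $\F_{q_0}$-space via a binary-representation trick). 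Define the $k$-th global check, for $k=1,\ldots,h$, by
\[
\sum_{i,j} \gamma_{i,j}^{q_0^{k-1}}\, x_{i,j} = 0,
\]
so the $h$ global checks are the successive Frobenius lifts $x\mapsto x^{q_0^{k-1}}$ of the single global check used in the $h=1$ construction.

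To verify the MR property, I would invoke the reduction of Holzbaur et al.~\cite{holzbaur_correctable_2021}: an erasure pattern $E$ is correctable by an MR $(m,n,1,1,h)$ grid code iff $E=E_0\cup F$ where $E_0$ is correctable by the MR $(m,n,1,1,0)$ tensor code and $|F|\le h$. Since the row/column parities of our code agree with those of the tensor code, we can peel symbols from any singly-erased row or column as usual; after maximal peeling the residual unknowns form a set $S$ with $|S|\le h$, which must be recovered from the $h$ global checks. Restricting the global coefficient matrix to $S$ yields a $|S|\times|S|$ submatrix of the $h\times mn$ Moore matrix $M_{k,(i,j)}=\gamma_{i,j}^{q_0^{k-1}}$; by the classical Moore/Gabidulin determinant identity this submatrix is invertible iff $\{\gamma_{i,j}:(i,j)\in S\}$ is $\F_{q_0}$-linearly independent in $\F_q$.

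The main obstacle is establishing this $\F_{q_0}$-independence for every residual pattern $S$ that can arise. The argument relies on the explicit algebraic structure of the $\gamma_{i,j}$'s chosen in Theorem~\ref{thm:h1m-1}: any $\F_{q_0}$-linear relation among $\{\gamma_{i,j}:(i,j)\in S\}$ must correspond to a ``degenerate'' row/column incidence configuration on $S$ — precisely one that would have been eliminated by further peeling. The hard technical step is turning this heuristic into a clean combinatorial lemma: one must show that maximal peeling leaves only residuals whose $\gamma$-values are independent, using that the total $\F_{q_0}$-dimension available is $m+h-2$ and that non-peelable configurations within a size-$h$ set must involve both row- and column-induced relations, which the two-coordinate encoding of $\gamma_{i,j}$ separates. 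Once this independence is established, the construction yields an MR $(m,n,1,1,h)$ grid code over $\F_q$ of size $q=q_0^{m+h-2}=O(mn)^{m+h-2}$, matching the claimed bound up to the constant factor in the statement.
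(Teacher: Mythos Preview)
Your high-level architecture (Gabidulin/Frobenius lift of a single global check, then verify via the Moore determinant) is exactly the paper's, but the verification step contains a genuine gap.

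The claim that ``after maximal peeling the residual unknowns form a set $S$ with $|S|\le h$'' is false. Take $h=1$ and let $E$ be a single simple cycle of length $2k$: every row and column meeting $E$ has exactly two erasures, so no peeling is possible, and the residual has $2k$ symbols, not $1$. More generally, for a spanning tree plus $h$ extra edges the peeled core is the union of the $h$ induced cycles, which by Proposition~\ref{prop:cycle-union} can have up to $2(m+h-1)$ edges. So you are never reduced to checking invertibility of an $h\times h$ Moore minor on the raw $\gamma_{i,j}$'s; the Moore condition you actually need (Corollary~\ref{cor:cycle-ind}) is that the $h$ \emph{cycle sums} $\gamma(C_1),\ldots,\gamma(C_h)$ are $\F_{q_0}$-linearly independent, where each $\gamma(C_\ell)$ is a $\pm 1$ combination of up to $2(m+h-1)$ of the $\gamma_{i,j}$'s.

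This is also why the binary-encoding $\gamma$'s from Theorem~\ref{thm:h1m-1} are not enough: they live in an $(m-1)$-dimensional $\F_{q_0}$-space, so for $h\ge m$ you cannot possibly get $h$ independent cycle sums, and even for smaller $h$ there is no mechanism forcing independence. The paper instead takes the $\gamma_{i,j}$'s (for $i\le m-1$) to be columns of a BCH code of designed distance $2(m+h-2)+1$ over $\F_2$, and sets $\gamma(m,\cdot)=0$. The point of BCH is that \emph{any} $2(m+h-2)$ of the $\gamma_{i,j}$'s are $\F_2$-independent; combined with the cycle-union bound (after zeroing row $m$ the relevant edges number at most $2(m+h-2)$), this forces every nontrivial $\F_2$-combination of the $h$ cycle sums to be nonzero. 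That combinatorial bound on $|C_1\cup\cdots\cup C_h|$ is the ingredient your sketch is missing, and it is what makes the exponent $m+h-2$ rather than something larger.
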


We further give a field size lower bound when $m$ is a sufficiently large function of $h$, beating the previously best lower bound of constant in this regime.

\begin{theorem}[Informal version of Theorem~\ref{thm:lb-general}]
Assume $m \ge (h-1)^2$, then $q(m,n,1,1,h) \ge \Omega(n^{h-1}).$
\end{theorem}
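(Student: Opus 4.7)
The plan is to exhibit $\Omega(n^{h-1})$ distinct algebraic conditions that must be simultaneously satisfied by the parity check matrix of any MR $(m,n,1,1,h)$ grid code with $m \ge (h-1)^2$, and then argue that satisfying all of them forces the field size to be $\Omega(n^{h-1})$. The strategy is to embed a higher-order MDS substructure inside the grid code via a carefully chosen family of correctable erasure patterns.

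First, I would construct a parameterized family of correctable erasure patterns. For each $(h-1)$-tuple $(c_1,\ldots,c_{h-1})$ of distinct columns lying outside a fixed reserved set of $h$ columns, I would define a pattern $E(c_1,\ldots,c_{h-1})$ consisting of a fixed ``skeleton'' of erasures that uses $(h-1)^2$ dedicated rows and the first $h$ columns, together with $h-1$ ``floating'' blocks, where the $i$th block places a vertical stack of $h-1$ erasures in column $c_i$ spanning the $i$th dedicated group of $h-1$ rows. The skeleton would be designed so that after absorbing at most $h$ of its entries into the $h$ global parities, the remaining set forms a bipartite forest in the $m \times n$ grid. By Holzbaur et al.~\cite{holzbaur_correctable_2021}, $E(c_1,\ldots,c_{h-1})$ is then information-theoretically correctable, so MR-ness forces the restriction of our parity check matrix to this set to have full column rank.

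Next I would project onto the floating positions. Since the skeleton is attached to sufficiently many ``free'' row and column parities, Schur-complementing those parities out of the combined parity check system reduces the correctability condition for $E(c_1,\ldots,c_{h-1})$ to the non-vanishing of a single algebraic condition that depends only on the tuple $(c_1,\ldots,c_{h-1})$ and on the coefficients of the $h$ global parity checks at columns $c_1,\ldots,c_{h-1}$. After this reduction, the conditions across all tuples should coincide with the generic-position conditions that define an order-$h$ higher-order MDS code of length $n$, whose field size lower bound of order $n^{h-1}$ is established in \cite{brakensiek2023improved,roth2021higher}. The factor $(h-1)^2$ in the hypothesis on $m$ arises precisely because each of the $h-1$ floating blocks needs $h-1$ rows of vertical working space in order to contribute one higher-order MDS constraint per column index.

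The main obstacle will be designing the skeleton so that simultaneously (a) each $E(c_1,\ldots,c_{h-1})$ is genuinely correctable, and (b) the Schur-complement step produces honest higher-order MDS conditions on the $\binom{n-h}{h-1} = \Theta(n^{h-1})$ distinct tuples rather than degenerate conditions that collapse into fewer equivalence classes. Verifying (b) in particular will require checking that the floating block restrictions of the parity check matrix behave like the generator matrix of a Reed--Solomon-like code evaluated at the column coordinates $c_1,\ldots,c_{h-1}$, so that distinct tuples produce algebraically distinct constraints. A secondary concern is ensuring that the hidden constant in $\Omega(\cdot)$ depends only on $h$, which is automatic once the skeleton is taken to have size depending only on $h$.
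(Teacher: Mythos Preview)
Your proposal is a plan rather than a proof, and its central step does not go through as stated. The crux of your argument is that after Schur-complementing out the local row and column parities, the residual full-rank conditions on the floating blocks ``should coincide with the generic-position conditions that define an order-$h$ higher-order MDS code,'' and that the floating restrictions ``behave like the generator matrix of a Reed--Solomon-like code evaluated at the column coordinates.'' But the global parity entries $c^{i,j}_k \in \F_q$ are completely arbitrary; nothing in the MR hypothesis forces them to be polynomial evaluations or to satisfy the full suite of $\MDS(h)$ intersection-dimension conditions from \cite{bgm2021mds,roth2021higher}. What MR-ness gives you is only that certain specific collections of $h$ vectors in $\F_q^h$ are linearly independent (Proposition~\ref{prop:cycle-ind}), which is a much weaker family of constraints than higher-order MDS. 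So the invocation of the $\Omega(n^{h-1})$ bound from \cite{brakensiek2023improved} is not justified: you never exhibit a length-$n$ code that is provably $\MDS(h)$. You yourself flag exactly this as ``the main obstacle'' and then leave it unresolved.

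The paper takes a different and more direct route that avoids higher-order MDS entirely. It works purely with cycle sums $\Sigma_H(C) \in \F_q^h$: it builds $h$ families $\cF_1,\ldots,\cF_h$ of cycles, each of length $2(h-1)$ and each family parameterized by an $(h-1)$-tuple of columns drawn from disjoint reserved column blocks, with the $\ell$th family using its own block of $h-1$ rows (this is where $m \ge (h-1)^2$ enters, since adjacent row blocks need to overlap in one row to keep the cycles connected). Two claims are then checked directly from Proposition~\ref{prop:cycle-ind-h'}: (i) distinct cycles in the same $\cF_\ell$ have projectively distinct cycle sums, and (ii) any transversal $(C_1,\ldots,C_h) \in \cF_1 \times \cdots \times \cF_h$ has linearly independent cycle sums. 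This is exactly the hypothesis of Lemma~3.1 of Gopi et al.~\cite{gopi2018Maximally} (stated here as Proposition~\ref{prop:q-ind}), which immediately yields $q \ge \Omega(N^{h-1}/h)$ with $N = \lfloor n/(h(h-1))\rfloor$. The key point is that this MR-LRC lemma needs only the transversal-independence condition (ii), not the full higher-order MDS property your approach would require.
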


This result is proved by combining the characterization of correctable patterns due to Holzbaur et al.~\cite{holzbaur_correctable_2021} with lower bounds techniques developed by Gopi et al.~\cite{gopi2018Maximally} for MR LRCs. Though the upper and lower bounds we prove for the $m,h=O(1)$ regime are polynomial in $n$ whose exponent is growing with $m,h$, we leave open the question of figuring exactly the correct exponent of $n$ as a function of $m,h$. %

We also improve the lower bound (\ref{eq:lb}) by showing that $q(m,n,1,1,h) \ge q(m-2,n-h,1,1,1)$, which shows that the constructing codes for $h\ge 2$ is at least as hard as for $h=1$ (see Theorem~\ref{thm:monotone-h}). We remark that some of our techniques also work in the regime in which $m$ is growing as a function of $n$. These extensions are discussed in the Appendix\change{, including the following result:}

\change{
\begin{theorem}[Informal version of Theorem~\ref{thm:h1-bootstrap}]
Assume that $m \le n$ are powers of two, then $q(m, n, 1, 1, 1) \le 8(8n/m)^{m-1}$.
\end{theorem}
}

\section{Problem Setting}\label{sec:problem}
To work toward these results, we now formally define the MR grid code problem. We define $[m] := \{1, 2, \hdots, m\}$. In our $m \times n$ grid code, we always assume $m \le n$. We describe a $(m,n,a=1,b=1,h)$ grid code by its parity check matrix $H \in \F_q^{(m+n+h-1) \times mn}$. \change{For this parity check matrix to be sensible, we always assume that $m+n+h-1 \le mn$.} Note that any $(i,j) \in [m]\times [n]$ corresponds to the column indexed $n(i-1)+j$ of $H$. In other words, column $\ell$ of $H$ corresponds to the grid square $(\lceil \ell / n\rceil, (\ell \!\!\mod n) + 1)$. We now describe $H$ as follows, see Figure~\ref{fig:ex} for an example.

\begin{itemize}
\item The first $m$ rows of $H$ correspond to the row parity checks. In particular, for each $i \in [m]$ and $\ell \in [mn]$, $H_{i,\ell} = 1$ if $\ell \in \{(i-1)n+1, (i-1)n+2, \hdots, in\}$ and $0$ otherwise. %
\item The next $n-1$ rows of $H$ correspond to the column parity checks. Note that the $n$th column parity check is implied by the sum of the $m$ row parity checks minus the sum of the other $n-1$ column parity checks. For all $j \in [n]$ and $\ell \in [mn]$, we have that $H_{j+m,\ell} = 1$ if $\ell \in \{i, j+n, \hdots, j + (m-1)n\}$ and $0$ otherwise.
\item The final $h$ rows of $H$ correspond to the global parity checks. We currently do not specify these parity checks as they depend on the specifics of each construction, but we introduce the following notation. For all $i \in [m], j \in [n], k \in [h]$, we let $c^{i,j}_k$ be alternative notation for the entry $H_{m+n+k-1, n(i-1)+j}$. In particular, $c^{i,j} := (c^{i,j}_1, \hdots, c^{i,j}_k) \in \F_q^h$ is the $(i,j)$th parity check vector.
\end{itemize}

\begin{figure}
\def\arraystretch{1.2}
\setcounter{MaxMatrixCols}{20}
\[
    \begin{pmatrix}
    1 & 1 & 1 & 1 & 1 & & & & & & & & & &\\
    & & & & & 1 & 1 & 1 & 1 & 1 & & & &\\
    & & & & & & & & & & 1 & 1 & 1 & 1 & 1\\
    1 & & & & & 1 & & & & & 1 & & & &\\
    & 1 & & & & & 1 & & & & & 1 & & &\\
    & & 1 & & & & & 1 & & & & & 1 & &\\
    & & & 1 & & & & & 1 & & & & & 1 &\\
    c^{1,1}_1 & c^{1,2}_1 & c^{1,3}_1 & c^{1,4}_1 & c^{1,5}_1 & c^{2,1}_1 & c^{2,2}_1 & c^{2,3}_1 & c^{2,4}_1 & c^{2,5}_1 & c^{3,1}_1 & c^{3,2}_1 & c^{3,3}_1 & c^{3,4}_1 & c^{3,5}_1\\
    c^{1,1}_2 & c^{1,2}_2 & c^{1,3}_2 & c^{1,4}_2 & c^{1,5}_2 & c^{2,1}_2 & c^{2,2}_2 & c^{2,3}_2 & c^{2,4}_2 & c^{2,5}_2 & c^{3,1}_2 & c^{3,2}_2 & c^{3,3}_2 & c^{3,4}_2 & c^{3,5}_2\\
    \end{pmatrix}
\]
\caption{Parity check matrix for $(m=3, n=5, a=1, b=1, h=2)$.}
\label{fig:ex}
\end{figure}

We let $C$ denote the set of all $c \in \F^{mn}$ such that $Hc = 0$. Given a pattern $E \subseteq [m] \times [n]$, we let $H|_{E}$ denote the restriction of $H$ to the columns indexed by $\{n(i-1)+j \mid (i,j) \in E\}$. We say that $E$ is \emph{correctable} \change{for any $c \in C$ we have that $c|_{\bar{E}} = 0$ iff $c = 0$, where $\bar{E} = [m] \times [n] \setminus E$}. \change{Equivalently, $E$ is \emph{not} correctable if there exists a nonzero $c \in C$ such that $c_{i,j} \neq 0$ only if $(i,j) \in E$. Since $H c = 0$ for all $c \in C$, the existence of such a codeword is equivalent to there being a linear dependence among the columns of $H|_E$. In other words, we have the following proposition.}

\change{
\begin{proposition}
$E \subseteq [m] \times [n]$ is correctable if and only if $\rank H|_{E} = |E|$.
\end{proposition}
}
\subsection{Correctable Patterns}

Holzbaur et al.~\cite{holzbaur_correctable_2021} give a full characterization of the correctable patterns of a MR $(m, n, a=1, b=1, h)$ grid code. First, we recall the characterization of the $h = 0$ case by Gopalan et al.~\cite{Gopalan2016}. In particular, $E \subseteq [m] \times [n]$ is correctable in a MR $(m, n, a=1, b=1)$ tensor code if and only if $E$ is \emph{acyclic}. That is, if we interpret $E$ as the edges of a bipartite graph with \emph{left vertices} $[m]$ and \emph{right vertices} $[n]$, then the result graph lacks a simple cycle. Here, for any $k\ge 2$, we define a simple cycle of length $2k$ to be a sequence of edges
\[
(i_1, j_1),  (i_2, j_1), (i_2, j_2), (i_3, j_2), \hdots, (i_k, j_k), (i_1, j_k) \in [m] \times [n]
\]
where $i_1, \hdots, i_k \in [m]$ and $j_1, \hdots, j_k \in [n]$ are pairwise distinct. 

Holzbaur et al.~\cite{holzbaur_correctable_2021} give the following characterization of the correctable patterns of $(m, n, a=1, b=1, h)$ grid codes. A similar characterization appears in Gopalan et al.~\cite{Gopalan2016}.

\begin{theorem}[\cite{holzbaur_correctable_2021}]\label{thm:a=b=1-pat}
A pattern $E \subseteq [m] \times [n]$ is correctable by a MR $(m, n, a=1, b=1, h)$ grid code if and only if there exists $E' \subseteq E$ of size at least $|E|-h$ such that $E'$ is acyclic.
\end{theorem}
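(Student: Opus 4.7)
The plan is to translate correctability of $E$ into a rank condition on $H|_E$ and read off that condition from the bipartite graph structure of $E$. Viewing $E$ as the edge set of a bipartite graph on $[m] \cup [n]$, let $V(E)$ be the set of grid rows and columns incident to some edge of $E$, and $c(E)$ the number of connected components of the induced subgraph on $V(E)$. The first step is the \emph{only if} direction. Restricted to columns indexed by $E$, the first $m+n-1$ rows of $H$ (the row and column parities) form, up to a single redundant row, the vertex--edge incidence matrix of the bipartite graph $(V(E),E)$, in which each column has exactly two $+1$ entries. Over any field, this matrix has rank $|V(E)|-c(E)$: a spanning forest contributes $|V(E)|-c(E)$ independent columns, and each non-tree edge closes an even-length cycle whose alternating-sign sum of edge columns is zero (this identity is valid in every characteristic). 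The remaining $h$ global-check rows add at most $h$ to the rank, so $\rank H|_E \le |V(E)|-c(E)+h$. Correctability forces $\rank H|_E = |E|$, hence $|E|-|V(E)|+c(E) \le h$. Since this quantity is the minimum number of edges one must delete from $E$ to obtain a forest, there exists an acyclic $E'\subseteq E$ with $|E'|\ge |E|-h$.

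For the \emph{if} direction, the plan is to show that a generic choice of global parity check vectors $c^{i,j}\in \F_q^h$, over a sufficiently large $\F_q$, yields a grid code that simultaneously corrects every $E$ with $\beta(E):=|E|-|V(E)|+c(E)\le h$; such a code is MR by definition. Fix one such $E$ and a spanning forest $E'\subseteq E$, and set $F:=E\setminus E'$, so $|F|=\beta(E)\le h$. The row and column checks restricted to $E'$ already have rank $|E'|$ by the forest case of the previous paragraph. Correctability of $E$ is therefore equivalent to the $h\times |F|$ submatrix of global checks indexed by $F$ having full column rank \emph{modulo} the column span of $H|_{E'}$. This is the non-vanishing of a determinantal polynomial in the entries $c^{i,j}_k$ and thus defines a nonempty Zariski-open subset of the parameter space. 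Intersecting over the finitely many patterns $E$ with $\beta(E)\le h$ keeps the set nonempty, so a generic assignment (chosen deterministically over a sufficiently large extension, or by a Schwartz--Zippel-style random sample) produces a common $H$ that corrects every such $E$.

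The main obstacle is the rank calculation for the bipartite incidence matrix in arbitrary characteristic; the key point is the even-cycle alternating identity, which is what makes the rank $|V(E)|-c(E)$ field-independent and which fails for general (non-bipartite) incidence matrices. The sufficient direction, by contrast, reduces to a routine genericity argument and gives no quantitative field-size bound---obtaining explicit, small-field constructions is deferred to the constructive theorems (Theorems~\ref{thm:h1m-1}, \ref{thm:BCH-zero}, etc.) later in the paper.
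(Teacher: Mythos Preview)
The paper does not supply its own proof of this theorem; it is quoted as a result of Holzbaur et al.\ \cite{holzbaur_correctable_2021}. So there is no in-paper argument to compare against directly, though your cycle-reduction idea is exactly what the paper later formalizes in Proposition~\ref{prop:cycle-ind}.

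Your argument is essentially correct. The \emph{only if} direction is clean: bounding $\rank H|_E$ by (rank of the bipartite incidence submatrix) $+\,h$ is the right move, and you correctly isolate the even-cycle alternating identity as the reason the incidence rank equals $|V(E)|-c(E)$ in every characteristic.

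For the \emph{if} direction there is one step you should make explicit. You write that the determinantal condition ``defines a nonempty Zariski-open subset,'' but non-emptiness is not automatic: you must exhibit at least one assignment of the $c^{i,j}$ for which the cycle-sum matrix has full column rank. This is easy once stated. Fix $E$, a spanning forest $E'\subseteq E$, and enumerate $F=E\setminus E'=\{e_1,\dots,e_{|F|}\}$. Set $c^{e_\ell}$ to be the $\ell$-th standard basis vector of $\F_q^h$ and set $c^{i,j}=0$ for all other edges. Since each fundamental cycle $C_{e_\ell}$ contains $e_\ell$ as its unique non-forest edge, $\Sigma_H(C_{e_\ell})$ is $\pm$ the $\ell$-th basis vector, so the $|F|$ cycle sums are linearly independent. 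This witnesses non-vanishing of the relevant minor for each pattern $E$; now the finite intersection of nonempty Zariski-open sets over all admissible $E$ is nonempty, and the genericity argument goes through. Without this sentence, ``nonempty'' is asserted rather than proved.

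A minor phrasing point: ``the $h\times|F|$ submatrix of global checks indexed by $F$ having full column rank modulo the column span of $H|_{E'}$'' is slightly imprecise. What you actually need (and what your block picture gives) is that the $|F|$ cycle-sum vectors $\Sigma_H(C_e)\in\F_q^h$ are linearly independent; the reduction modulo $H|_{E'}$ happens at the level of full columns of $H$, not just the last $h$ rows.
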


We note they prove an analogous result for arbitrary $a,b \ge 1$, but we only use the $a=b=1$ setting in this paper. We say our parity check matrix $H \in \F_q^{(m+n+h-1) \times mn}$ is MR $(m,n,a=1,b=1,h)$ if $\rank H|_{E} = |E|$ for every pattern $E$ which is an \change{acyclic} graph plus at most $h$ edges. This criterion can be a bit cumbersome to check directly, so we next give a somewhat simpler criterion for checking if a parity check matrix is MR.

\subsection{Cycle Sum Independence}

Assume we are given parity check vectors $c^{i,j} \in \F^h_q$ for all $(i,j) \in [m] \times [n]$ and a corresponding parity check matrix $H \in \F_q^{(m+n+h-1) \times mn}$. Given a cycle $C := \{(i_1, j_1),  (i_2, j_1), \hdots, (i_k, j_k), (i_1, j_k)\} \subseteq [m] \times [n]$, we define $C$'s \emph{cycle sum} to be 
\begin{align}
    \Sigma_H(C) := \sum_{\ell=1}^k c^{i_\ell, j_\ell} - c^{i_{\ell+1}, j_\ell} \in \F_q^h\label{eq:cycle-sum}
\end{align}
where we define $i_{k+1} := i_1$. Note that if $q$ is odd, then the above definition depends on the orientation of the cycle, although these different sums are equal up to a scalar (i.e., projectively equivalent). %

\change{Consider} a spanning tree $T \subseteq [m] \times [n]$ (i.e., an acyclic graph for which every vertex is incident with at least one edge). \change{For} every $(i,j) \in [m] \times [n] \setminus T$ \change{note that there is a unique path connecting $i$ and $j$. Since $i$ and $j$ are on opposite sides of a bipartite graph, this path has an odd number of edges, say $2k-1$. Denote the vertices of this path by $i = p_1, p_2, \hdots, p_{2k} = j$. Since this path from $i$ to $j$ within $T$ is unique, there is exactly one cycle which uses edges from $T \cup \{(i,j)\}$. Precisely, this cycle is
\[
  C_{T,(i,j)} := \{(i,j),(i=p_1,p_2),(p_3,p_2),(p_3,p_4)\hdots, (p_{2k-1},p_{2k-2}),(p_{2k-1},j=p_{2k})\}.
\]} 
\change{For example, if $m=n=2$, and $T = \{(1,1),(1,2),(2,2)\}$, then $(2,1) \in [m] \times [n] \setminus T$ and $C_{T,(2,1)} = \{(2,1),(2,2),(1,2),(1,1)\}$.} 

\begin{proposition}\label{prop:cycle-ind}
$H \in \F_q^{(m+n+h-1) \times mn}$ is MR$(m,n,a=1,b=1,h)$ if and only if for all spanning trees $T \subseteq [m] \times [n]$ and distinct edges $(i_1, j_1), \hdots, (i_h, j_h) \in [m] \times [n] \setminus T$, we have that $\Sigma_H(C_{T,(i_1,j_1)}), \hdots, \Sigma_H(C_{T,(i_h,j_h)}) \in \F_q^h$ are linearly independent.
\end{proposition}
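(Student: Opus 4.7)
The plan is to interpret the column-independence of $H|_E$ through the cycle space of the bipartite graph on $[m]\cup[n]$ with edge set $E$. The first $m+n-1$ rows of $H$ form (with one redundant row removed) the vertex-edge incidence matrix of $K_{m,n}$: a coefficient vector $(\alpha_{i,j})_{(i,j)\in E}$ lies in their kernel exactly when $\sum_{j:(i,j)\in E}\alpha_{i,j}=0$ at every left-vertex $i$ and the analogous condition holds at every right-vertex $j$. Such ``bipartite flows'' are generated by signed indicators of even cycles: for a simple cycle $(i_1,j_1),(i_2,j_1),\dots,(i_k,j_k),(i_1,j_k)$, setting $\alpha_{i_\ell,j_\ell}=+1$ and $\alpha_{i_{\ell+1},j_\ell}=-1$ yields a flow. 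In particular, if $T$ is a spanning tree of $[m]\cup[n]$, then the flows $f_1,\dots,f_h$ supported on the fundamental cycles $C_{T,e_1},\dots,C_{T,e_h}$ of distinct edges $e_1,\dots,e_h\notin T$ form a basis for the kernel of the first $m+n-1$ rows restricted to $T\cup\{e_1,\dots,e_h\}$. Applying the last $h$ rows of $H$ to $f_k$ produces, by construction, exactly the cycle sum $\Sigma_H(C_{T,e_k})$ of (\ref{eq:cycle-sum}) up to a global sign depending on orientation.

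For the forward direction, assume $H$ is MR. Fix a spanning tree $T$ and distinct $e_1,\dots,e_h\notin T$. By Theorem~\ref{thm:a=b=1-pat}, the pattern $E:=T\cup\{e_1,\dots,e_h\}$ is correctable (with witness $E'=T$), so $\rank H|_E=m+n+h-1$. Since the first $m+n-1$ rows already have rank $m+n-1$ on $T\subseteq E$, having full rank on $E$ is equivalent to the last $h$ rows being injective on the $h$-dimensional kernel described above; by the observation on $\Sigma_H$, this is equivalent to $\Sigma_H(C_{T,e_1}),\dots,\Sigma_H(C_{T,e_h})$ being linearly independent.

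For the converse, suppose the cycle-sum independence holds for all spanning trees and all $h$-tuples of non-tree edges. Take an arbitrary correctable pattern $E$ and choose $E'\subseteq E$ to be a \emph{maximum}-size acyclic subgraph of $E$; then $r:=|E|-|E'|\le h$ by Theorem~\ref{thm:a=b=1-pat}. The key matroid-theoretic observation is that because $E'$ is maximum-acyclic in $E$, every edge of $E\setminus E'$ creates a cycle when adjoined to $E'$, so any acyclic extension of $E'$ inside $K_{m,n}$ automatically stays disjoint from $E\setminus E'$. Extend $E'$ to a spanning tree $T$ of $K_{m,n}$, enumerate $E\setminus E'=\{e_1,\dots,e_r\}$, and pad with arbitrary additional edges outside $T$ to obtain $e_1,\dots,e_h\notin T$. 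Then $E\subseteq T\cup\{e_1,\dots,e_h\}$, and the enlarged pattern has full column rank by the argument of the previous paragraph, so $H|_E$ does as well.

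The main obstacle I anticipate is being precise about the cycle space structure: verifying that the $h$ fundamental-cycle flows $f_1,\dots,f_h$ really do span the kernel (a standard but nontrivial calculation in graphic matroid theory) and carefully tracking the $\pm 1$ orientation so that the image of $f_k$ under the global block is $\Sigma_H(C_{T,e_k})$ up to a uniform sign, which is harmless for linear independence. A minor edge case is that the padding step in the converse requires $(m-1)(n-1)\ge h$; outside this regime there simply are no correctable patterns of size $m+n+h-1$, and the proposition still holds by applying the same argument with fewer additional edges.
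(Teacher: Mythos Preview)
Your proof is correct and follows essentially the same approach as the paper: both reduce via Theorem~\ref{thm:a=b=1-pat} to patterns of the form $T\cup\{e_1,\dots,e_h\}$ and then show that full column rank of $H|_E$ is equivalent to linear independence of the $h$ cycle sums. The paper carries this out by explicit column operations that put $H|_E$ into block-triangular form (with the cycle sums appearing as the lower-right $h\times h$ block) and a determinant factorization, whereas you phrase the same computation dually via the cycle-space kernel of the incidence rows; you are also somewhat more careful than the paper in the converse direction about extending a maximum acyclic $E'$ to a spanning tree and padding to exactly $h$ non-tree edges.
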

Note that in the case $h=1$, we recover the condition that ``no cycle sum equals zero'' found by Gopalan et al.~\cite{Gopalan2016}.
\begin{proof}
By Theorem~\ref{thm:a=b=1-pat}, we have that $H$ is MR if and only if $\rank H|_{E} = \change{|E|}$ for every $E$ \change{which is the union of an acyclic graph and at most $h$ additional edges. If there is a strictly larger $E' \supsetneq E$ of this form, then note that $\rank H|_{E'} = \change{|E'|}$ implies $\rank H|_{E} = |E|.$} In other words, we may assume that $E = T \cup \{(i_1, j_1), \hdots, (i_h, j_h)\}$ where $T$ is a spanning tree\footnote{\change{If $E$ does not contain a spanning tree, then viewed as a subgraph of $[m] \times [n]$, $E$ has multiple connected components. Adding any edge bridging these connected components will not change the property that $E$ is an acyclic graph plus at most $h$ additional edges.}} \change{and $(i_1,j_1), \hdots, (i_h,j_h) \not\in T$ are distinct}. \change{Therefore, $|E| = m+n-1+h$.} Since $H$ has $m+n+h-1$ rows, it is equivalent to check that $\det H|_{E} \neq 0$.

We now perform a series of column operators on $H$. For each $\ell \in [h]$, let $C_\ell$ be shorthand for $C_{T,(i_\ell,j_\ell)}$. We seek to replace the $(i_\ell, j_\ell)$th column of $H$ with (essentially) $\Sigma_H(C_\ell)$. To do this, assume that
\[
    C_\ell = \{(i'_1, j'_1), (i'_2, j'_1), \hdots, (i'_k, j'_k), (i'_1, j'_k)\},
\]
where $(i'_1, j'_1) = (i_\ell, j_\ell)$ and all other edges of $C_\ell$ are in $T$. We then replace that column $H|_{(i_\ell, j_\ell)}$ with
\[
    \hat{H}_\ell := \sum_{\ell' = 1}^k H|_{(i_{\ell'},j_{\ell'})} - H|_{(i_{\ell'+1},j_{\ell'})}.
\]
One can verify that the first $m+n-1$ entries of $\hat{H}_\ell$ are zeros and the last $h$ entries are precisely $\Sigma_H(C_\ell)$. After these column operations, we have that $H|_{E}$ looks like the following block matrix.

\begin{align*}
\left[H|_{T} \middle| \hat{H}_1, \hdots, \hat{H}_h\right] &= \left[\begin{array}{c|c}
A_T & 0 \\\hline
B_T & \Sigma_H(C_1) \cdots \Sigma_H(C_h)
\end{array}\right],
\end{align*}
where $A_T$ is a $\change{(}n+m-1\change{)} \times \change{(}n+m-1\change{)}$ matrix and $B_T$ is a $h \times \change{(}n+m-1\change{)}$ matrix. Note that $\Sigma_H(C_1) \cdots \Sigma_H(C_h)$ form a $h \times h$ matrix, so we have that
\[
    \det(H|_{E}) = \det(A_T) \det(\Sigma_H(C_1) \cdots \Sigma_H(C_h)).
\]
Recall we are trying to prove that $\det(H|_{E}) \neq 0$ if and only if $\det(\Sigma_H(C_1) \cdots \Sigma_H(C_h))$. Thus, it suffices to prove that $\det(A_T) \neq 0$ unconditionally. To see why, note that the first $m+n-1$ rows of $H$ are precisely the parity check matrix of a MR $(m,n,a=1,b=1,h=0)$ tensor code. Furthermore, \change{the erasure pattern corresponding to the edges of a spanning tree $T$} is recoverable by such an MR tensor code~\change{\cite{Gopalan2016}.} Thus, $\rank A_T = |T|$. Equivalently, $\det(A_T) \neq 0$ since $A_T$ is square, as desired. Therefore, $H|_{E}$ has full column rank if and only if $\Sigma_H(C_{T,(i_1,j_1)}), \hdots, \Sigma_H(C_{T,(i_h,j_h)}) \in \F_q^h$ are linearly independent.
\end{proof}

For use in Section~\ref{sec:lower-bounds}, we prove a variant of Proposition~\ref{prop:cycle-ind} in which the cycles do not need to come from a common spanning tree. However, the implication is now only in one direction.

\begin{proposition}\label{prop:cycle-ind-h'}
Let $H \in \F_q^{(m+n+h-1) \times mn}$ be MR$(m,n,a=1,b=1,h)$. For $h' \le h$. Let $C_1, \hdots, C_{h'} \subseteq [m] \times [n]$ be cycles with the following properties:
\begin{itemize}
\item[(1)] There exists an acyclic subgraph $T \subseteq C_1, \hdots, C_{h'}$ such that $|(C_1 \cup \cdots \cup C_{h'}) \setminus T| \le h$.
\item[(2)] For all $\ell \in [h']$, there is $(i_\ell, j_\ell) \in C_\ell$ which does not appear in $C_{\ell'}$ for all $\ell' \in [h'] \setminus \{\ell\}$.
\end{itemize}
Then, $\Sigma_H(C_1), \hdots, \Sigma_H(C_{h'}) \in \F_q^h$ are linearly independent.
\end{proposition}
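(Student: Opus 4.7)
The plan is to reduce the desired linear-independence statement to the fact that ``acyclic plus $h$ extra edges'' patterns are correctable by $H$ (Theorem~\ref{thm:a=b=1-pat}). To each cycle $C_\ell$ associate the signed indicator vector $v_{C_\ell} \in \F_q^{[m]\times [n]}$ induced by the parameterization in (\ref{eq:cycle-sum}): $v_{C_\ell}$ has entry $+1$ on the ``forward'' edges $(i_s,j_s)$ and $-1$ on the ``backward'' edges $(i_{s+1},j_s)$ of $C_\ell$ (in characteristic $2$ the signs coincide). The key observation is that $H v_{C_\ell}$ is zero in the first $m+n-1$ coordinates, because every vertex incident to $C_\ell$ meets exactly one forward and one backward edge, so all row and column sums cancel; meanwhile, by definition the last $h$ coordinates of $H v_{C_\ell}$ are exactly $\Sigma_H(C_\ell)$.

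Now suppose a relation $\sum_{\ell=1}^{h'} \alpha_\ell \Sigma_H(C_\ell) = 0$ holds. Set $w := \sum_\ell \alpha_\ell v_{C_\ell}$. From the observation above, $Hw = 0$, so $w$ is a codeword of $H$. Moreover $w$ is supported inside $C_1 \cup \cdots \cup C_{h'}$, which by hypothesis (1) equals $T \cup F$ with $T$ acyclic and $|F| \le h$. Applying Theorem~\ref{thm:a=b=1-pat} to $E := T \cup F$ with $E' := T$ shows that $E$ is correctable, so $H|_E$ has full column rank; therefore $w = 0$. Finally, hypothesis (2) provides, for each $\ell$, a distinguished edge $(i_\ell,j_\ell) \in C_\ell$ missing from all other $C_{\ell'}$. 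Reading off the $(i_\ell,j_\ell)$-coordinate of $w$ gives $\pm \alpha_\ell = 0$, whence $\alpha_\ell = 0$ for every $\ell$, as required.

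The only delicate step is confirming that the signed characteristic vector $v_C$ lies in the kernel of the tensor (row/column) parity checks, which is immediate from the bipartite cycle structure. Beyond that, the argument is a direct, ``coordinate-free'' recasting of the column-operation computation used in Proposition~\ref{prop:cycle-ind}: whereas that proof fixed a common spanning tree so the cycles $C_{T,(i_\ell,j_\ell)}$ had pairwise disjoint ``extra'' edges, here hypothesis (1) supplies the acyclic backbone $T$ for the correctability step and hypothesis (2) supplies the private edges $(i_\ell,j_\ell)$ needed to extract each $\alpha_\ell$. I do not foresee any substantive obstacle; the proposition follows once these two hypotheses are matched to these two roles.
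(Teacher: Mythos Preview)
Your proof is correct and is essentially the same argument as the paper's, just phrased dually: the paper performs column operations on $H|_{E'}$ replacing the $(i_\ell,j_\ell)$th column by $Hv_{C_\ell}$ and then invokes full column rank of $H|_{E'}$, whereas you work directly with the codeword $w=\sum_\ell \alpha_\ell v_{C_\ell}$ and invoke the same full-rank fact to force $w=0$. Both arguments use hypothesis~(1) for correctability of $E'=\bigcup_\ell C_\ell$ via Theorem~\ref{thm:a=b=1-pat} and hypothesis~(2) to isolate each coefficient, exactly as you describe.
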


\begin{proof}
Let $E' = C_1 \cup \cdots \cup C_{h'}$. By property (1), we have that $|E' \setminus T|\le h$ and $T$ is acyclic. Thus, we have that $E'$ is correctable by Theorem~\ref{thm:a=b=1-pat}. Therefore, $\rank H|_{E'} = |\change{E'}|$.

Like in Proposition~\ref{prop:cycle-ind}, for each $\ell \in [h']$, we can perform a sequence of column operations which replaces the $(i_\ell, j_\ell)$th column of $\rank H|_{E'}$ with $m+n-1$ zeros followed by $\Sigma_H(C_\ell)$. By property (2), we have that $(i_\ell, j_\ell) \not\in C_{\ell'}$ for all $\ell \neq \ell' \in [h]$. Therefore, we can perform the column operations for all $\ell \in [h]$ simultaneously. After these column operations, $H|_{E'}$ must still have full column rank. This can only be possible if $\Sigma_H(C_1), \hdots, \Sigma_H(C_{h'}) \in \F_q^h$ are linearly independent, as desired.
\end{proof}

\section{Constructions}

We now work on describing our constructions for MR $(m,n,a=1,b=1,h)$ codes.

\subsection{The Gabidulin Construction}

The proof of Theorem~\ref{thm:a=b=1-pat} is constructive in the following sense. 
The use of Gabidulin codes from \cite{holzbaur_correctable_2021} converts any MR $(m,n,a,b,h=0)$ code over $\F_q$ into an MR $(m,n,a,b,h)$ code over a field of size $q^{(m-a)(n-b)}$. \change{Since an MR $(m,n,a=1,b=1,h=0)$ code exists over $\F_2$ (see \cite{Gopalan2016}), we obtain} a $2^{(m-1)(n-1)}$ field size construction for an MR $(m,n,a=1,b=1,h)$ grid code. In paper, we construct Gabidulin codes over much smaller field sizes with the same MR property.

Assume that $q = p^d$, where $p$ is a prime and $d \ge h$. Given $\alpha_1, \hdots, \alpha_h \in \F_q$, we define the \emph{Moore matrix} generated by $\alpha_1, \hdots, \alpha_h$ to be
\[
\Moore(\alpha_1, \hdots, \alpha_h) := \begin{pmatrix}
    \alpha_1^{p^0} & \alpha_2^{p^0} & \cdots & \alpha_h^{p^0}\\
    \alpha_1^{p^1} & \alpha_2^{p^1} & \cdots & \alpha_h^{p^1}\\
    \vdots & \vdots & \ddots & \vdots\\
    \alpha_1^{p^{h-1}} & \alpha_2^{p^{h-1}} & \cdots & \alpha_h^{p^{h-1}}
    \end{pmatrix}.
\]
We make use of the following fact concerning the non-singularity of the Moore-matrix.
\begin{proposition}[e.g., \cite{gabidulin2021rank}]\label{prop:Moore}
$\det \Moore(\alpha_1, \hdots, \alpha_h) \neq 0$ if and only if for all $\alpha_1, \hdots, \alpha_h$ are linearly independent over $\F_p$. In other words, for all $b_1, \hdots, b_h \in \F_p$, we have that $b_1 \alpha_1 + b_2\alpha_2 + \cdots + b_h\alpha_h \neq 0$.
\end{proposition}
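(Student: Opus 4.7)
The plan is to prove both directions of the equivalence, with the ``only if'' direction following from a direct Frobenius argument and the ``if'' direction by induction on $h$ using the theory of $p$-linearized polynomials.

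For the ``only if'' direction, suppose $b_1,\ldots,b_h \in \F_p$ are not all zero and satisfy $\sum_j b_j \alpha_j = 0$. Applying the Frobenius $x \mapsto x^{p^i}$ and using $b_j^{p^i} = b_j$ (since $b_j \in \F_p$), we get $\sum_j b_j \alpha_j^{p^i} = 0$ for every $i = 0, 1, \ldots, h-1$. This exhibits a nontrivial $\F_p$-linear (hence $\F_q$-linear) dependence among the columns of $\Moore(\alpha_1,\ldots,\alpha_h)$, so the determinant vanishes.

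For the ``if'' direction, I would induct on $h$. The base case $h=1$ is trivial since $\Moore(\alpha_1) = (\alpha_1)$, which is nonzero iff $\alpha_1 \neq 0$ iff $\alpha_1$ is $\F_p$-linearly independent. For the inductive step, assume $\alpha_1,\ldots,\alpha_h$ are $\F_p$-linearly independent and define
\[
f(x) := \det \Moore(\alpha_1, \ldots, \alpha_{h-1}, x).
\]
Expanding the determinant along the last column gives $f(x) = \sum_{i=0}^{h-1} c_i x^{p^i}$, where the leading coefficient $c_{h-1} = \det \Moore(\alpha_1, \ldots, \alpha_{h-1})$ is nonzero by the inductive hypothesis (the $\alpha_1,\ldots,\alpha_{h-1}$ are $\F_p$-linearly independent since they are part of a larger independent set). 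The key observation is that $f$ is a $p$-linearized polynomial: for any $x,y$ in the algebraic closure and any $b \in \F_p$,
\[
f(x+y) = \sum_i c_i (x+y)^{p^i} = \sum_i c_i (x^{p^i} + y^{p^i}) = f(x)+f(y), \qquad f(bx) = \sum_i c_i b^{p^i} x^{p^i} = b f(x).
\]
Thus $f$ is an $\F_p$-linear function of $x$, and its zero set is an $\F_p$-subspace of the algebraic closure of dimension at most $\deg f = p^{h-1}$, i.e., of $\F_p$-dimension at most $h-1$.

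To finish, observe that $f(\alpha_j) = 0$ for each $j = 1,\ldots,h-1$, since setting $x = \alpha_j$ makes two columns of the Moore matrix equal. Therefore the kernel of $f$ contains $\Span_{\F_p}(\alpha_1,\ldots,\alpha_{h-1})$, which has $\F_p$-dimension exactly $h-1$ by induction. By the dimension bound above, the kernel equals this span. Since $\alpha_h$ is $\F_p$-linearly independent of $\alpha_1,\ldots,\alpha_{h-1}$, it lies outside the kernel, so $f(\alpha_h) = \det \Moore(\alpha_1,\ldots,\alpha_h) \neq 0$, completing the induction. The main conceptual step --- and the one I'd expect a reader to need to pause on --- is recognizing that the polynomial $f(x)$ arising from the determinant expansion is $\F_p$-linear as a function on the algebraic closure, which is exactly what forces its root set to be an $\F_p$-subspace and makes the dimension-counting argument work.
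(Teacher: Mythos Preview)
Your proof is correct and is essentially the standard argument for this classical fact. The paper itself does not prove this proposition at all; it is stated with a citation (to Gabidulin's book on rank codes) and used as a black box. So there is no ``paper's proof'' to compare against --- you have supplied what the paper omits.

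One small wording issue: the phrase ``of dimension at most $\deg f = p^{h-1}$'' could be misread as asserting the $\F_p$-dimension is at most $p^{h-1}$. You mean that the zero set has \emph{cardinality} at most $\deg f = p^{h-1}$ (since $c_{h-1}\neq 0$ makes $f$ a nonzero polynomial of that degree), hence $\F_p$-dimension at most $h-1$. Tightening that sentence would remove any ambiguity.
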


We say that our parity check matrix $H \in \F^{m+n+h-1 \times mn}_q$ is a \emph{Gabidulin construction} if for all $(i,j ) \in [m] \times [n]$, there exists $\gamma\change{(}i,j\change{)} \in \F_q$ such that $c^{i,j} = (\gamma\change{(}i,j\change{)}, \gamma\change{(}i,j\change{)}^p, \hdots, \gamma\change{(}i,j\change{)}^{p^{h-1}}).$

Note that the map $x \mapsto x^p$ is a \emph{Frobenius automorphism} of $\F_q$ into itself. In particular, $(x+y)^p = x^p + y^p$ and $(cx)^p = cx^p$ for all \change{$x,y \in \F_q$ and $c \in \F_p$, where $\F_p$ is the base subfield of $\F_q$}. As such, for a given cycle $C = \{(i_1, j_1),  (i_2, j_1), \hdots, (i_k, j_k), (i_1, j_k)\}$ $\subseteq [m] \times [n]$, if $H$ is a Gabidulin construction, then $\Sigma_H(C) = \sum_{\ell=1}^k c^{i_\ell, j_\ell} - c^{i_{\ell+1}, j_\ell} =$ $(\gamma\change{(}C\change{)}, \gamma\change{(}C\change{)}^p,$ $\hdots, \gamma\change{(}C\change{)}^{p^{h-1}}),$
where
\[
    \gamma\change{(}C\change{)} := \sum_{\ell=1}^k \gamma\change{(}i_\ell, j_\ell\change{)} - \gamma\change{(}i_{\ell+1}, j_\ell\change{)}.
\]
Thus, as an immediate application of Proposition~\ref{prop:Moore}, Proposition~\ref{prop:cycle-ind} can be recast as follows when $H$ is a Gabidulin construction.

\begin{corollary}\label{cor:cycle-ind}
Assume that $H$ is a Gabidulin construction based on $\gamma : [m] \times [n] \to \F_q$. Then, $H$ is MR$(m,n,a=1,b=1,h)$ if and only if for all spanning trees $T \subseteq [m] \times [n]$ and distinct edges $(i_1, j_1), \hdots, (i_h, j_h) \in [m] \times [n] \setminus T$, we have that $\gamma(C_{T,(i_1,j_1)}), \hdots, \gamma(C_{T,(i_h,j_h)}) \in \F_q$ are linearly independent over $\F_p$.
\end{corollary}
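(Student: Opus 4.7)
The plan is to combine Proposition~\ref{prop:cycle-ind} with Proposition~\ref{prop:Moore} using the structural identity for cycle sums in a Gabidulin construction that was derived in the paragraph preceding the corollary. Since both directions are already handled by the hypotheses, the argument reduces to a translation rather than a new combinatorial or algebraic insight.

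First, I would invoke Proposition~\ref{prop:cycle-ind} verbatim: $H$ is MR$(m,n,1,1,h)$ if and only if, for every spanning tree $T \subseteq [m]\times[n]$ and every choice of $h$ distinct non-tree edges $(i_1,j_1),\dots,(i_h,j_h)$, the cycle sum vectors $\Sigma_H(C_{T,(i_1,j_1)}),\dots,\Sigma_H(C_{T,(i_h,j_h)}) \in \F_q^h$ are linearly independent over $\F_q$. So the task is to re-express this $\F_q$-linear independence condition in terms of the scalars $\gamma_{C_{T,(i_\ell,j_\ell)}}$.

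Next, I would apply the identity already established in the paragraph above the corollary: for a Gabidulin construction, each cycle sum has the form $\Sigma_H(C) = (\gamma_C,\gamma_C^p,\dots,\gamma_C^{p^{h-1}})$. Writing $C_\ell := C_{T,(i_\ell,j_\ell)}$ and stacking the vectors $\Sigma_H(C_1),\dots,\Sigma_H(C_h)$ as columns produces exactly the $h\times h$ matrix $\Moore(\gamma_{C_1},\dots,\gamma_{C_h})$. The $\F_q$-linear independence of the columns is equivalent to nonsingularity of this matrix.

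Finally, I would invoke Proposition~\ref{prop:Moore}, which asserts that $\Moore(\gamma_{C_1},\dots,\gamma_{C_h})$ is nonsingular if and only if $\gamma_{C_1},\dots,\gamma_{C_h}$ are linearly independent over $\F_p$. Chaining the three equivalences gives the stated biconditional. There is no real obstacle here; the one thing to be careful about is ensuring the shorthand $\gamma(C_{T,(i_\ell,j_\ell)})$ in the corollary statement is understood as the scalar $\gamma_{C_\ell}$ defined by the cycle-sum formula $\gamma_C = \sum_{\ell=1}^k \gamma_{i_\ell,j_\ell} - \gamma_{i_{\ell+1},j_\ell}$ rather than as some evaluation of $\gamma$ at a single edge.
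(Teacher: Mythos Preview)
Your proposal is correct and matches the paper's own argument: the corollary is stated as an immediate application of Proposition~\ref{prop:Moore} to Proposition~\ref{prop:cycle-ind} via the identity $\Sigma_H(C) = (\gamma_C,\gamma_C^p,\dots,\gamma_C^{p^{h-1}})$, which is exactly what you do. Your note about interpreting the notation $\gamma(C_{T,(i_\ell,j_\ell)})$ as the scalar cycle sum $\gamma_C$ is also apt.
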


Recall $q = p^d$. If $p = 2$, then linear independence over $\F_p$ is equivalent to any $h' \le h$ cycle sums not summing in total to $0$. For convenience, we often think of the map $\gamma : [m] \times [n] \to \F_q$ instead as the map $\gamma : [m] \times [n] \to \F_p^d$.

Another important fact we need is that the $h$ cycles we consider in Corollary~\ref{cor:cycle-ind} have an overall small ``footprint.''
\begin{proposition}\label{prop:cycle-union}
Let $T \subseteq [m] \times [n]$ be a spanning tree. For any $(i_1, j_1), \hdots, (i_h,j_h) \in [m] \times [n] \setminus T$, we have that
\begin{align}
    \left|\bigcup_{\ell=1}^h C_{T,(i_\ell, j_\ell)}\right| \le 2(m+h-1).\label{eq:cycle-union}
\end{align}
\end{proposition}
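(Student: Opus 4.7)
The plan is to view the union $E^* := \bigcup_{\ell=1}^{h} C_{T,(i_\ell,j_\ell)}$ as a bipartite subgraph $G = (V', E^*)$ of the complete bipartite graph on $[m]$ and $[n]$, where $V'$ is the set of vertices incident to at least one edge of $E^*$. Write $|V'| = r + c$, where $r$ is the number of row vertices in $V'$ (so $r \le m$) and $c$ is the number of column vertices in $V'$ (so $c \le n$). The goal becomes bounding $|E^*|$.

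Two elementary observations about $G$ drive the argument. First, every vertex of $V'$ lies on at least one of the cycles $C_{T,(i_\ell,j_\ell)}$ and therefore has degree at least $2$ in $G$; summing degrees on each side of the bipartition yields $|E^*| \ge 2r$ and $|E^*| \ge 2c$, hence $|E^*| \ge 2\max(r,c)$. Second, deleting the $h$ non-tree edges $(i_1,j_1),\ldots,(i_h,j_h)$ from $G$ leaves a subforest of $T$, which on a vertex set of size $|V'|$ has at most $|V'| - 1$ edges; consequently $|E^*| \le |V'| + h - 1 = r + c + h - 1$.

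Chaining the two inequalities yields $2\max(r,c) \le r + c + h - 1$, i.e.\ $|r - c| \le h - 1$, so $|V'| = r + c \le 2\min(r,c) + h - 1 \le 2m + h - 1$, using $\min(r,c) \le r \le m$. Plugging this back into the cyclomatic-type bound gives $|E^*| \le (2m + h - 1) + h - 1 = 2(m+h-1)$, as required. I do not foresee a real obstacle here; the only mildly subtle point is recognizing that the combination ``min-degree $\ge 2$'' together with ``cyclomatic number $\le h$'' forces the two sides of the bipartition of $V'$ to be nearly balanced, which is precisely what converts the trivial bound $r \le m$ on one side into a bound on all of $|V'|$ and hence on $|E^*|$.
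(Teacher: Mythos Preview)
Your argument is correct. It is, however, genuinely different from the paper's proof. The paper works inside the tree $T$: it defines $T_{[m]}\subseteq T$ to be the minimal subtree connecting all $m$ left vertices, proves by induction that $|T_{[m]}|\le 2m-2$, and then shows each fundamental cycle $C_{T,(i_\ell,j_\ell)}$ contributes at most two edges outside $T_{[m]}$ (namely the two edges incident to $j_\ell$), giving $|E^*|\le (2m-2)+2h$. Your proof, by contrast, never looks at the internal structure of $T$; you extract everything from two global facts about $E^*$ --- minimum degree $\ge 2$ and cyclomatic number $\le h$ --- and use the bipartite degree count to force $|r-c|\le h-1$, which upgrades the trivial bound $r\le m$ into $|V'|\le 2m+h-1$. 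The paper's route is more structural and perhaps makes clearer \emph{where} the extra edges sit (near the ``new'' column vertex of each added edge), whereas your route is shorter, avoids the induction, and would apply verbatim to any union of $h$ cycles in a bipartite graph whose non-tree edges number at most $h$.
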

\begin{proof}
\change{If $m = 1$, then $T = [m] \times [n]$ so the LHS is (\ref{eq:cycle-union}) is always zero, while $2(m+h-1) = 2h \ge 0$. Thus, we assume $m \ge 2$.}

\change{Say an edge $(i,j) \in T$ is \emph{right-lonely} if $(i,j)$ is the only edge of $T$ containing $j \in [n]$. Let $T' \subseteq T$ be the (maximal) subgraph of $T$ with all right-lonely edges removed. Let $S \subseteq [n]$ be the right vertices incident to $T'$. As such, $T'$ has at most $|S|+m-1$ edges. Thus, in order for $T'$ to lack any further right-lonely edges, we must have that $|S|+m-1 \ge |T'| \ge 2|S|$. Therefore, $m-1\ge |S|$, so $|T'| \le 2(m-1)$.}

We now prove (\ref{eq:cycle-union}). For each $\ell \in [h]$, we claim that $|C_{T,(i_\ell, j_\ell)} \setminus \change{T'}| \le 2$. To see why, $j_\ell$ is connected to two left vertices $i_\ell$ and some $i'_\ell \in [m]$. Since $C_{T, (i_\ell, j_\ell)} \setminus \{(i_\ell,j_\ell)\} \subseteq T$, we have that the path connecting $i_\ell$ and $i'_\ell$ inside $C_{T,(i_\ell,j_\ell)}$ which \change{avoids} $j_\ell$ must \change{have that every right vertex has degree at least two.} \change{Therefore, $C_{T,(i_\ell,j_\ell)} \setminus \{(i_\ell,j_\ell), (i'_\ell, j_\ell) \subseteq T'$}. \change{Hence,} $|C_{T,(i_\ell, j_\ell)} \setminus \change{T'}| \le |\{(i_\ell,j_\ell),(i'_\ell,j_\ell)\}| \le 2$. Thus,
\[
\left|\bigcup_{\ell=1}^h C_{T,(i_\ell, j_\ell)}\right| \le |T_{[m]}| + \sum_{\ell=1}^h |C_{T,(i_\ell, j_\ell)} \setminus T_{[m]}| \le 2m-2 + 2h = 2(m+h-1),
\]
which proves (\ref{eq:cycle-union}). 
\end{proof}
\begin{remark}
Proposition~\ref{prop:cycle-union} is tight for $n \ge m+h-1$. For example, consider $T \subseteq [m] \times [n]$ as follows.
\[
    T = \{(i,i) \mid i \in [m]\} \cup \{(i,i+1) \mid i \in [m-1]\} \cup \{(m,j) \mid j \in \{m+1, \hdots, n\}\}
\]
For each $\ell \in [h]$, let $(i_\ell,j_\ell) = (\change{1}, m+\ell-1)$. It is straightforward to check that (\ref{eq:cycle-union}) is tight in this case.
\end{remark}

\subsection{Binary Encoding Construction}
Now that we have established some general properties of the Gabidulin family of constructions, we now give some specific Gabidulin constructions which are MR grid codes. We start with the case $h=1$.

\begin{theorem}\label{thm:h1m-1}
Assume that $n$ is a power of $2$. Then there is a Gabidulin construction which is MR $(m, n, a=1,b=1,h=1)$ with $q = n^{m-1}$.
\end{theorem}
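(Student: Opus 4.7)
The plan is to apply Corollary~\ref{cor:cycle-ind} with characteristic $p = 2$. Since $h = 1$, ``linear independence over $\F_p$'' reduces to ``nonzero,'' so it suffices to exhibit a map $\gamma : [m] \times [n] \to \F_q$ with $q = n^{m-1}$ such that $\gamma_C \neq 0$ for every simple cycle $C \subseteq [m] \times [n]$. Grouping the cycle sum by row instead of by column (the two edges of row $i_\ell$ in $C$ are $(i_\ell, j_\ell)$ and $(i_\ell, j_{\ell-1})$, where $j_0 := j_k$), in characteristic $2$ one may rewrite
\[
    \gamma_C \;=\; \sum_{\ell=1}^k \bigl(\gamma_{i_\ell, j_\ell} + \gamma_{i_\ell, j_{\ell-1}}\bigr).
\]

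For the construction, write $n = 2^s$ and let $q = n^{m-1} = 2^{s(m-1)}$, so that $\F_q$ is an $(m-1)$-dimensional vector space over its subfield $\F_{2^s}$. Fix distinct $\alpha_1, \ldots, \alpha_m \in \F_{2^s}$ (possible since $m \le n = 2^s$) and let $\beta_1, \ldots, \beta_n$ enumerate $\F_{2^s}$. Identifying $\F_q \cong \F_{2^s}^{m-1}$ as $\F_{2^s}$-vector spaces, I define
\[
    \gamma_{i,j} \;:=\; \bigl(\alpha_i \beta_j,\; \alpha_i^2 \beta_j,\; \ldots,\; \alpha_i^{m-1} \beta_j\bigr) \;\in\; \F_{2^s}^{m-1} \;\cong\; \F_q.
\]

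For the verification, set $\delta_\ell := \beta_{j_\ell} + \beta_{j_{\ell-1}}$, so that the $r$-th coordinate of $\gamma_C$ equals $\sum_{\ell=1}^k \alpha_{i_\ell}^r \delta_\ell$. Assuming $\gamma_C = 0$, this sum vanishes for every $r = 1, \ldots, m-1$. Crucially, the $r = 0$ sum $\sum_\ell \delta_\ell$ vanishes \emph{automatically} by telescoping, since each $\beta_{j_\ell}$ appears exactly twice and we work in characteristic $2$. Because $k \le m$, I therefore obtain $\sum_\ell \alpha_{i_\ell}^r \delta_\ell = 0$ for all $r = 0, 1, \ldots, k-1$, a $k \times k$ Vandermonde system in the pairwise distinct nodes $\alpha_{i_1}, \ldots, \alpha_{i_k}$. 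Invertibility forces $\delta_\ell = 0$ for every $\ell$, so $\beta_{j_\ell} = \beta_{j_{\ell-1}}$, contradicting the distinctness of the $j_\ell$'s.

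The only real subtlety is the boundary case $k = m$: the $m-1$ coordinates of $\gamma_C$ alone provide only $m-1$ linear conditions on the $m$ unknowns $\delta_1, \ldots, \delta_m$, which a priori would be insufficient to conclude they all vanish. The free equation $\sum_\ell \delta_\ell = 0$ arising from telescoping in characteristic $2$ is exactly what promotes this to a square invertible Vandermonde system and closes the argument; it is also what explains why the field size matches the $n^{m-1}$ target exactly rather than requiring an extra factor of $n$.
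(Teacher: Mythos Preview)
Your proof is correct, but it takes a genuinely different route from the paper's. The paper's construction is a direct binary encoding: it partitions $\F_2^{(m-1)\log_2 n}$ into $m-1$ blocks of length $\log_2 n$, places the binary expansion of $j$ in block $i$ for $i\le m-1$, and sets row $m$ identically to zero. The verification is then a one-line observation: any simple cycle touches some row $i\le m-1$ in exactly two edges $(i,j),(i,j')$, and those are the \emph{only} edges contributing to block $i$, so that block of $\gamma_C$ equals the XOR of the binary expansions of $j\ne j'$ and is nonzero.

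Your construction $\gamma_{i,j}=(\alpha_i\beta_j,\alpha_i^2\beta_j,\ldots,\alpha_i^{m-1}\beta_j)$ instead encodes via a Vandermonde matrix, and the key step is the pleasant observation that the ``missing'' $r=0$ equation comes for free from telescoping in characteristic $2$, giving a full $k\times k$ invertible system. This is more algebraic than the paper's argument and arguably more flexible (any distinct $\alpha_i\in\F_{2^s}$ work, not just a block-indicator structure), at the cost of invoking Vandermonde invertibility where the paper needs nothing beyond ``distinct bitstrings XOR to nonzero.'' Both arguments land at exactly $q=n^{m-1}$; your explanation of why the telescoping trick is what saves the extra factor of $n$ in the boundary case $k=m$ is a nice conceptual point that the paper's block argument sidesteps entirely by design.
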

\begin{proof}
Let $d := \change{(m-1)} \log_2 n$ and let $q = 2^d$. Note that the additive structure of $\F_q$ is isomorphic to $\F^d_2$. For all $(i, j) \in [m] \times [n]$, let $\gamma(i,j) \in \F^d_2$ have the following structure. If $i \le m-1$, then we let $(\gamma(i,j)_{(i-1)\log_2 n + 1}, \hdots, \gamma(i,j)_{i\log_2 n})$ be the binary representation of $j$, with all other coordinates of $\gamma(i,j)$ equal to zero. Otherwise, if $i = m$, we let $\gamma(i,j) = 0$.

Since we are working in characteristic two and $h=1$, it suffices to prove for every simple cycle $C \subseteq [m] \times [n]$ that $\gamma(C) \neq 0$. To see why $\gamma(C) \neq 0$, note that $C$ must have length at least 4. Thus, there are at least two left vertices $i, i' \in [m]$ with nonzero degree in $C$. We may assume without loss of generality that $i \le m-1$. There are exactly two distinct $j, j' \in [n]$ for which $(i,j), (i,j') \in C$. In particular, \change{the coordinates of $\gamma(C)$ indexed by $\{(i-1)\log_2n+1, \hdots, i\log_2 n\}$} are the XOR of the binary representations of $j$ and $j'$. Since $j \neq j'$, this XOR is nonzero, as desired.
\end{proof}

\subsection{BCH Code Construction}

We next describe some more sophisticated constructions for general $h$ which make use of the BCH code~\cite{bose1960class,hocquenghem1959codes}. The columns of the parity check matrix of the BCH code have the following property.

\begin{proposition}[e.g., Exercise~5.9~\cite{guruswami2023essential}]\label{prop:BCH}
For any $D, N \ge 1$, let $d=1+\ceil{(D-1)/2}\ceil{\log_2(N+1)}$. Then, there exists $\alpha_1, \hdots, \alpha_N \in \F_{2}^d$ which are the columns of a parity check matrix of a code with Hamming distance at least $D$. In other words, for all nonempty $S \subseteq [N]$ of size at most $D-1$, $\sum_{i \in S} \alpha_i \neq 0$.
\end{proposition}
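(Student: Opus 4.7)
The claim is essentially the classical BCH distance bound in disguise, so my plan is to construct the $\alpha_i$ explicitly from powers of extension-field elements and deduce the linear-independence property via a Vandermonde computation upgraded by Frobenius.

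Let $b := \ceil{\log_2(N+1)}$ so that $F := \F_{2^b}$ has $|F^*| \ge N$, and let $t := \ceil{(D-1)/2}$ so that $D - 1 \le 2t$. Fix $N$ distinct nonzero elements $\beta_1, \ldots, \beta_N \in F^*$. Identifying $F$ with $\F_2^b$ as $\F_2$-vector spaces, define
\[
\alpha_i := (1, \beta_i, \beta_i^3, \beta_i^5, \ldots, \beta_i^{2t-1}) \in \F_2 \oplus F^t \cong \F_2^{1+tb} = \F_2^d.
\]
This is essentially an extended narrow-sense binary BCH parity check column; the leading $1$ coordinate takes care of odd-size $S$ and also keeps $d \ge 1$ in the edge case $D = 1$ (where $t=0$).

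For the main argument, I would suppose for contradiction that some nonempty $S \subseteq [N]$ with $|S| \le D - 1 \le 2t$ satisfies $\sum_{i \in S} \alpha_i = 0$. Reading the first coordinate forces $|S|$ to be even, and each subsequent block yields $\sum_{i \in S} \beta_i^{2k-1} = 0$ in $F$ for $k = 1, \ldots, t$. The key trick is Frobenius in characteristic two: since squaring commutes with summation, $\sum_i \beta_i^m = 0$ implies $\sum_i \beta_i^{2^\ell m} = 0$ for every $\ell \ge 0$. Writing each $j \in \{1, \ldots, 2t\}$ uniquely as $j = 2^\ell m$ with $m$ odd, the constraint $m \le j \le 2t$ forces $m \in \{1, 3, \ldots, 2t-1\}$, so we can upgrade to $\sum_{i \in S} \beta_i^j = 0$ for every $j \in \{1, \ldots, 2t\}$. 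Taking the top $|S|$ of these identities (permitted since $|S| \le 2t$) gives $M \mathbf{1}_S = 0$ where $M_{j,i} = \beta_i^j$ is a square $|S| \times |S|$ matrix. Pulling $\beta_i$ out of the $i$-th column yields $\det M = \bigl(\prod_{i \in S} \beta_i\bigr)\det V$, where $V$ is a standard Vandermonde in the pairwise distinct $\beta_i$; both factors are nonzero, so $M$ is invertible and $\mathbf{1}_S = 0$, contradicting nonemptiness.

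The only subtle step is the Frobenius coverage argument, which is a short arithmetic check; the rest is standard linear algebra. The dimension count gives $d = 1 + tb = 1 + \ceil{(D-1)/2}\ceil{\log_2(N+1)}$ as stated, and the boundary case $D = 1$ is handled automatically by the leading $1$ coordinate alone.
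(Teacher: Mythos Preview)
Your proof is correct and is precisely the standard extended narrow-sense binary BCH argument. The paper does not supply its own proof of this proposition---it is stated as a citation to a textbook exercise---so there is nothing to compare against; your write-up is exactly the argument one would expect a reader to reconstruct, including the Frobenius trick to fill in the even powers and the Vandermonde nonsingularity to reach the contradiction.
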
 

This leads to the following construction for any $h \ge 1$. We remark a similar approach is used by Gopalan et al.~\cite{gopalan2014explicit}.

\begin{theorem}\label{thm:BCH-simple}
Assume $n$ is a power of two and let $2^k$ be the least power of two greater than $m$. Then, there is a Gabidulin construction which is MR $(m, n, a=1,b=1,h)$ with $q = 2(2^k n)^{m+h-1} \le 2(2mn)^{m+h-1}$.
\end{theorem}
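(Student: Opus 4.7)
The plan is to apply Corollary~\ref{cor:cycle-ind} and pick $\gamma:[m]\times[n]\to\F_2^d$ as the columns of a BCH parity-check matrix with distance $D=2(m+h-1)+1$. Working in characteristic two, linear independence over $\F_2$ of the cycle sums $\gamma(C_{T,(i_1,j_1)}),\ldots,\gamma(C_{T,(i_h,j_h)})$ is equivalent to the statement that for every nonempty $S\subseteq [h]$, the sum $\sum_{\ell\in S}\gamma(C_{T,(i_\ell,j_\ell)})$ is nonzero.

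First I would unfold each cycle sum. Since the signs in the definition of $\gamma(C)$ vanish over $\F_2$, we get $\gamma(C)=\sum_{e\in C}\gamma(e)$, and therefore
\[
\sum_{\ell\in S}\gamma(C_{T,(i_\ell,j_\ell)})=\sum_{e\in E_S}\gamma(e),
\]
where $E_S$ denotes the symmetric difference of the cycles $C_{T,(i_\ell,j_\ell)}$ for $\ell\in S$. The set $E_S$ is nonempty because $(i_\ell,j_\ell)$ lies in $C_{T,(i_\ell,j_\ell)}$ and in no other $C_{T,(i_{\ell'},j_{\ell'})}$ (each such cycle contains exactly one non-tree edge, its defining one). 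Moreover, by Proposition~\ref{prop:cycle-union},
\[
|E_S|\le\left|\bigcup_{\ell\in S}C_{T,(i_\ell,j_\ell)}\right|\le 2(m+h-1).
\]
So it is enough to ensure that every nonempty subset of the collection $\{\gamma(i,j):(i,j)\in[m]\times[n]\}$ of size at most $2(m+h-1)$ XORs to a nonzero vector, i.e.\ that these vectors are the columns of a parity check matrix of a binary code of minimum distance at least $D=2(m+h-1)+1$.

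Finally I would plug into Proposition~\ref{prop:BCH} with $N=mn$ and this value of $D$. This produces vectors in $\F_2^d$ with
\[
d=1+\lceil (D-1)/2\rceil\lceil\log_2(mn+1)\rceil=1+(m+h-1)\lceil\log_2(mn+1)\rceil.
\]
Because $m<2^k$ and $n$ is a power of two, $mn+1\le 2^k n$ and hence $\lceil\log_2(mn+1)\rceil\le k+\log_2 n$. Taking $q=2^d$ and viewing $\F_q$ as the additive group $\F_2^d$, we obtain a Gabidulin construction over a field of size
\[
q\le 2\cdot 2^{(m+h-1)(k+\log_2 n)}=2(2^k n)^{m+h-1}\le 2(2mn)^{m+h-1},
\]
since $2^k\le 2m$. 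By Corollary~\ref{cor:cycle-ind}, this construction is MR $(m,n,1,1,h)$. The only step I expect to demand any care is the combinatorial bookkeeping showing $E_S\neq\emptyset$ and $|E_S|\le 2(m+h-1)$, but both follow immediately from the tree-cycle structure and Proposition~\ref{prop:cycle-union}.
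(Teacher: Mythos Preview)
Your proposal is correct and follows essentially the same approach as the paper: choose $\gamma$ from a BCH code with distance $D=2(m+h-1)+1$, use Proposition~\ref{prop:cycle-union} to bound the total support of the $h$ cycles, and invoke Corollary~\ref{cor:cycle-ind}. The only cosmetic difference is that the paper argues the $\gamma(i,j)$ for $(i,j)\in C_1\cup\cdots\cup C_h$ are all $\F_2$-linearly independent and then uses the distinguishing non-tree edge of each cycle, whereas you pass through the symmetric difference $E_S$ and check its XOR is nonzero; these are equivalent formulations of the same argument.
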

\begin{proof}
By Proposition~\ref{prop:BCH} with $D = 2(m+h-1)+1$ and $N = mn$ (so $\log_2 q = 1 + (m+h-1)\lceil \log_2(mn+1)\rceil = 1 + (m+h-1)\log_2(2^k n)$), there exist $\alpha_1, \hdots, \alpha_{mn} \in \F_q$ such that any $2(m+h-1)$ choices of $\alpha_i$ are linearly independent over $\F_2$. We claim that the Gabidulin construction with evaluation points $\gamma(i,j) = \alpha_{n(i-1) + j}$ is MR$(m,n,a=1,b=1,h)$.

By Corollary~\ref{cor:cycle-ind}, it suffices to show that any $h$ cycles $C_1, \hdots, C_h \subseteq [m] \times [n]$ formed by adding $h$ edges to a spanning tree $T$ satisfy $\gamma(C_1), \hdots, \gamma(C_h)$ are $\F_2$-linearly independent. By Proposition~\ref{prop:cycle-union}, we know that $|C_1 \cup \cdots \cup C_h| \le 2(m+h-1)$. Thus, by our choice of BCH code, $\{\gamma(i,j) : (i,j) \in C_1 \cup \cdots \cup C_h\}$ are linearly independent. However, for each $\ell \in [h]$, $C_\ell$ has an edge which does not appear in any other cycle $C_{\ell'}$ for $\ell' \in [h]$. Thus, since $\gamma(C_\ell) = \sum_{(i,j) \in C_\ell} \gamma(i,j)$, we have that $\gamma(C_1), \hdots, \gamma(C_h)$ are linearly independent, as desired.
\end{proof}

When $h=1$, we get an exponent of $m$ rather than $m-1$ like in Theorem~\ref{thm:h1m-1}. We can make Theorem~\ref{thm:BCH-simple} more similar to Theorem~\ref{thm:h1m-1} by adding an extra left vertex whose edges labels are all zero.

\begin{theorem}\label{thm:BCH-zero}
Assume $n$ is a power of two and let $2^k$ be the least power of two greater than $m-1$. Then, there is a Gabidulin construction which is MR $(m, n, a=1,b=1,h)$ with $q = 2(2^k n)^{m+h-2} \le 2(2mn)^{m+h-2}$.
\end{theorem}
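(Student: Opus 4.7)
The plan is to refine Theorem~\ref{thm:BCH-simple} by reserving the $m$-th left vertex as a ``free'' row on which $\gamma$ is identically zero, so that only the $(m-1)n$ cells with $i \le m-1$ need distinct BCH labels. Concretely, I would apply Proposition~\ref{prop:BCH} with $N = (m-1)n$ and $D = 2(m+h-2)+1$ to obtain $\alpha_1,\dots,\alpha_{(m-1)n} \in \F_{2^d}$ such that any $\le 2(m+h-2)$ of them are $\F_2$-linearly independent. Then define the Gabidulin evaluation points by $\gamma(i,j) = \alpha_{(i-1)n+j}$ for $i \le m-1$ and $\gamma(m,j) = 0$ for all $j$. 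Since $n$ is a power of $2$ and $2^k \ge m$, one checks $\lceil \log_2((m-1)n+1)\rceil \le k + \log_2 n$, so $d \le 1 + (m+h-2)(k+\log_2 n) = \log_2 q$ as required.

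By Corollary~\ref{cor:cycle-ind}, the remaining task is to show that for every spanning tree $T$ and distinct extra edges $e_1 = (i_1,j_1), \dots, e_h = (i_h,j_h) \in [m]\times[n]\setminus T$, the cycle sums $\gamma(C_1),\dots,\gamma(C_h)$ are $\F_2$-linearly independent, where $C_\ell := C_{T,e_\ell}$. Let $E' := \{(i,j) \in C_1\cup\cdots\cup C_h : i \ne m\}$ denote the edges on which $\gamma$ can be nonzero. The first key step is to establish the sharpened bound $|E'| \le 2(m+h-2)$ by case analysis. If vertex $m$ appears in some cycle, it has even degree there, so at least two edges of $C_1\cup\cdots\cup C_h$ are incident to $m$, giving $|E'| \le 2(m+h-1) - 2$ by Proposition~\ref{prop:cycle-union}. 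Otherwise, $m$ is untouched and all cycles live among the $\le m-1$ left vertices in some subset $S\subseteq[m-1]$; rerunning the induction in the proof of Proposition~\ref{prop:cycle-union} with $|T_S| \le 2|S|-2 \le 2(m-1)-2$ then gives $|C_1\cup\cdots\cup C_h|\le 2(m+h-2)$.

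For linear independence, suppose $\sum_{\ell \in S}\gamma(C_\ell) = 0$ for some nonempty $S\subseteq[h]$. Because $\gamma \equiv 0$ on edges at vertex $m$, this sum equals $\sum_{e \in E'} \mathrm{mult}_S(e)\,\gamma(e)$, where $\mathrm{mult}_S(e) \in \{0,1\}$ records the parity of the number of cycles in $\{C_\ell : \ell \in S\}$ containing $e$. Since $|E'|\le 2(m+h-2) = D-1$, the BCH property forces $\mathrm{mult}_S(e) = 0$ for every $e \in E'$, i.e., the symmetric difference $\bigtriangleup_{\ell \in S} C_\ell$ contains only edges incident to $m$. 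Such a subgraph is a star at $m$, whose right endpoints would each have odd degree $1$ unless the star is empty; but the symmetric difference of cycles must have all even degrees, so $\bigtriangleup_{\ell \in S} C_\ell = \emptyset$. This contradicts the basic observation that each extra edge $e_\ell$ lies in $C_\ell$ and in no other $C_{\ell'}$, so $e_\ell \in \bigtriangleup_{\ell \in S} C_\ell$ for every $\ell \in S$.

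The main obstacle is the tight bound $|E'| \le 2(m+h-2)$: the naive application of Proposition~\ref{prop:cycle-union} only gives $2(m+h-1)$, which would match Theorem~\ref{thm:BCH-simple}. The savings comes precisely from the two-case parity analysis above, which exploits the fact that vertex $m$ either contributes at least two ``free'' edges incident to itself, or is absent altogether and allows a spanning-tree induction on a smaller vertex set. The rest of the proof is then a clean combination of the BCH distance with a symmetric-difference argument that handles the possibility that some $e_\ell$ is itself incident to $m$.
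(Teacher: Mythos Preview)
Your proposal is correct and follows essentially the same approach as the paper: set $\gamma(m,\cdot)=0$, take BCH labels with $N=(m-1)n$ and $D=2(m+h-2)+1$ on the remaining cells, bound the non-$m$ edges of the cycle union by $2(m+h-2)$, and conclude via the even-degree XOR argument that no nonempty sub-sum of cycle sums can vanish. The only cosmetic difference is that where the paper dispatches the case ``no cycle touches $m$'' by invoking Theorem~\ref{thm:BCH-simple} on $[m-1]\times[n]$, you instead re-derive the sharpened edge bound directly from the induction in Proposition~\ref{prop:cycle-union}; this is arguably more self-contained but not a different idea.
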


\begin{proof}
Let $\gamma : [m-1] \times [n] \to \F_q$ be the edge labeling of Theorem~\ref{thm:BCH-simple} (where the $m$ chosen in that Theorem is $m-1$). In particular, the entries of $\gamma$ come from a BCH code with parameters $D = 2(m+h-2)+1$ and $N = (m-1)n$. We extend this to $\gamma : [m] \times [n] \to \F_q$ by setting $\gamma(m, j) = 0$ for all $j \in [n]$. We claim this construction is MR $(m, n, a=1,b=1,h)$. To see why, consider any $h$ cycles $C_1, \hdots, C_h \subseteq [m] \times [n]$ formed by adding $h$ edges to a spanning tree $T$. If $C_1, \hdots, C_h \subseteq [m-1] \times [n]$, then we know that $\gamma(C_1), \hdots, \gamma(C_h)$ are linearly independent by the previous analysis.

Otherwise, it suffices to prove for any $h' \in [h]$ and any $1 \le \ell_1 < \cdots < \ell_{h'} \le h$ that $\gamma(C_{\ell_1}) + \cdots +\gamma(C_{\ell_{h'}}) \neq 0$. In other words, it suffices to show that $\gamma(C_{\ell_1} \oplus \cdots \oplus C_{\ell_{h'}}) \neq 0$, where $\oplus$ denote the exclusive OR (XOR) of the sets.

By Proposition~\ref{prop:cycle-union}, we know that $|C_1 \cup \cdots \cup C_h| \le 2(m+h-1)$. By assumption, there is at least one $\ell \in [h]$ for which $C_\ell$ is incident with the left vertex $m$. Therefore, at least two edges of $C_1 \cup \cdots \cup C_h$ involve left vertex $[m]$. Thus, $|(C_1 \cup \cdots \cup C_h) \cap [m-1] \times [n]| \le 2(m+h-2)$. In other words,
\[
    \gamma(C_{\ell_1} \oplus \cdots \oplus C_{\ell_{h'}}) = \sum_{(i,j) \in (C_{\ell_1} \oplus \cdots \oplus C_{\ell_{h'}}) \cap [m-1] \times [n]} \gamma(i,j)
\]
has at most $2(m+h-2)$ terms in its sum. 
Since the nonzero value of $\gamma$ come from a BCH code with $D = 2(m+h-2)+1$, it suffices to prove that $(C_{\ell_1} \oplus \cdots \oplus C_{\ell_{h'}}) \cap [m-1] \times [n]$ is nonempty. First, note that $C_{\ell_1} \oplus \cdots \oplus C_{\ell_{h'}}$ is nonempty as each cycle has an edge not appearing in any other cycle. Thus, the only way in which $(C_{\ell_1} \oplus \cdots \oplus C_{\ell_{h'}}) \cap [m-1] \times [n]$ is empty is for $C_{\ell_1} \oplus \cdots \oplus C_{\ell_{h'}} \subseteq \{m\} \times [n]$. In a cycle, every vertex has even degree. Thus, in the XOR of these cycles, every vertex still has even degree. However, there is no nonempty subgraph of $\{m\} \times [n]$ in which every vertex has even degree, a contradiction. Therefore, $\gamma(C_{\ell_1}) + \cdots \gamma(C_{\ell_{h'}}) \neq 0$ so $\gamma(C_1), \hdots, \gamma(C_h)$ are linearly independent, as desired.
\end{proof}

%

%

%

%

%

\iffalse
\subsection{Combined Techniques}

\josh{Completely obsolete now?} \josh{Not quite, $n^{m-1}$ for $h=2$ still not done...}

 \gopi{This can be generalized using the $2(m+h-1)$ bound on the number of edges. First we argue that there is a vertex in $[m]$ with degree at most $h+1$. Then we combine BCH and above idea to get $n^{m\ceil{h/2}}$ construction. The min-left-degree $h+1$ is tight when the subgraph of erasures has 2 vertices on the left and $h+1$ vertices on the right. Morever this subgraph arises by adding $h$ edges to a spanning tree on these vertices.}
 
 \gopi{This can be improved to get $n^{(m-1)\ceil{h/2}}$ by doing the above construction for $(m-1,n,a=1,b=1,h)$ adding one left vertex and labeling all its edges with $0.$ This works because of the following claim: there is in fact at least two left vertices with degree $\le h+1.$ If there is only one such vertex, then we should have $(u-1)(h+2)+2 \le e \le 2(u+h-1) \Rightarrow u \le 2.$ But when $u=2$, both left vertices have same left degree.\footnote{When global parties are outside, this should become $(n+\ceil{h/(m-1)})^{(m-1)\ceil{h/2}}$. This is achieved by tucking in $h$ global parities into the $(m-1)$ slots equitably.}}
\fi

\subsection{Additive Combinatorics Construction for \texorpdfstring{$m=3$}{m=3} and \texorpdfstring{$h=1$}{h=1}}

We conclude this section with a specialized construction when $m=3$ and $h=1$. Recall Theorem~\ref{thm:h1m-1} says that one can construct such a code over field size $q = n^2$. We now improve this to $q = n^{1+o(1)}$ with the caveat that $q$ is odd. We say that $A \subseteq \F_q$ has no arithmetic progressions of length $3$ (``3-AP free'') if for all distinct $x,y,z \in A$, we have that $x + y \neq 2z$. We first state how 3-AP free sets can be used to construct MR $(m=3, n, a=1, b=1, h=1)$ codes. We remark that 3-AP sets have also been used to construct MR LRCs~\cite{gopi2018Maximally}.

\begin{theorem}\label{thm:3-AP-con}
Let $q$ be an odd prime power and let $A \subseteq \F_q$ be a 3-AP free set of size $n$. Then, one can construct an MR $(m=3, n, a=1, b=1, h=1)$ grid code over $\F_q$.
\end{theorem}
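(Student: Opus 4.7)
The plan is to give a direct construction tailored to the odd-characteristic hypothesis (the Gabidulin framework from earlier in the paper exploits $\F_2$-linearity, which is unavailable here). Enumerate $A = \{a_1, \hdots, a_n\} \subseteq \F_q$ and, since $h=1$ makes each $c^{i,j}$ a scalar in $\F_q$, set
\[
    c^{i,j} = i \cdot a_j, \qquad (i,j) \in [3] \times [n].
\]
By Proposition~\ref{prop:cycle-ind}, which for $h=1$ reduces to requiring that no cycle sum vanishes, it suffices to verify $\Sigma_H(C) \neq 0$ for every simple cycle $C \subseteq [3] \times [n]$. Because $m = 3$ and a length-$2k$ bipartite cycle uses $k$ distinct left vertices, only $k \in \{2, 3\}$ are possible.

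For a length-$4$ cycle with rows $i_1 \neq i_2 \in [3]$ and columns $j_1 \neq j_2 \in [n]$, substituting into (\ref{eq:cycle-sum}) factors the cycle sum as $(i_1 - i_2)(a_{j_1} - a_{j_2})$. Both factors are nonzero: $i_1 - i_2$ is a nonzero integer of absolute value at most $2$, hence nonzero in $\F_q$ since $q$ is odd, and $a_{j_1} \neq a_{j_2}$ by the distinctness of the enumeration. No use of 3-AP-freeness is needed at this stage.

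The main content is the length-$6$ case. Here $(i_1, i_2, i_3)$ is a permutation of $(1, 2, 3)$ and $j_1, j_2, j_3 \in [n]$ are distinct, so expanding (\ref{eq:cycle-sum}) yields
\[
    \Sigma_H(C) = (i_1 - i_2)\, a_{j_1} + (i_2 - i_3)\, a_{j_2} + (i_3 - i_1)\, a_{j_3},
\]
with coefficients summing to $0$. The key observation---easily checked by inspecting the six permutations of $(1,2,3)$---is that their absolute values always form the multiset $\{1, 1, 2\}$, with the $\pm 2$ appearing exactly once. Consequently $\Sigma_H(C) = 0$ would force an identity of the shape $2 a_{j_\ell} = a_{j_{\ell'}} + a_{j_{\ell''}}$ for some relabeling, which is a genuine $3$-term arithmetic progression in $A$ (the three values are distinct because the $j$'s are), contradicting 3-AP-freeness. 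I do not anticipate any real obstacle beyond spotting this coefficient pattern; once it is noted, the verification collapses into a one-line appeal to the hypothesis on $A$, and together with the length-$4$ analysis it certifies MR-ness via Proposition~\ref{prop:cycle-ind}.
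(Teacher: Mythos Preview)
Your proof is correct and follows essentially the same route as the paper's: the paper uses $c^{i,j}=(i-1)a_j$ (which yields the same cycle sums as your $i\cdot a_j$, since the column-dependent shift cancels around any cycle), dispatches length-$4$ cycles by the same factorization $(i_1-i_2)(a_{j_1}-a_{j_2})$, and handles length-$6$ cycles by fixing the row order to $(1,2,3)$ rather than surveying all six permutations as you do. One small aside: the Gabidulin framework is not actually unavailable in odd characteristic---for $h=1$ it simply amounts to choosing scalars $\gamma_{i,j}=c^{i,j}$, and the paper explicitly presents its construction in that language.
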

\begin{proof}
Index the 3-AP free set by $A = \{a_1, \hdots, a_n\}$. We consider the Gabidulin construction (which is a general construction for $h=1$) with the map $\gamma : [m] \times [n] \to \F_q$ as follows:
\begin{align*}
\gamma(\change{1}, j) = 0, \gamma(\change{2}, j) = a_j, \gamma(\change{3}, j) = 2a_j \text{ for all $j \in [n]$}.
\end{align*}
We now seek to prove for every simple cycle $C \subseteq [m] \times [n]$ we have that $\gamma(C) \neq 0$. Since $m=3$, $C$ must have $4$ or $6$ edges. If $C$ has four edges, then there are distinct $i_1,i_2 \in [m]$ and $j_1, j_2 \in [n]$ such that $C = \{(i_1,j_1),(i_2,j_1),(i_2,j_2),(i_1,j_2)\}.$ In this case, we have that
\begin{align*}
    \gamma(C) &= \gamma(i_1,j_1) - \gamma(i_2, j_1) + \gamma(i_2, j_2) - \gamma(i_1, j_2)\\
    &= (i_1 - 1)a_{j_1} - (i_2 - 1)a_{j_1} + (i_2 - 1)a_{j_2} - (i_1 - 1)a_{j_2}\\
    &= (i_1 - i_2)(a_{j_1} - a_{j_2}) \neq 0,
\end{align*}
because $i_1 \neq i_2$ and $a_{j_1} \neq \change{a_{j_2}}$. Note this uses the fact that $q$ is odd so that $1, 2, 3$ are distinct in $\F_q$.

If $C$ has six edges, then there are distinct $j_1, j_2, j_3 \in [n]$ such that $C = \{(1,j_1),(2,j_1),(2,j_2),(3,j_2),$ $(3,j_3),(1,j_3)\}.$ We then have that
\begin{align*}
    \gamma(C) &= \gamma(1,j_1)-\gamma(2,j_1)+\gamma(2,j_2)-\gamma(3,j_2)+\gamma(3,j_3)-\gamma(1,j_3)\\
    &= 0 - a_{j_1} + a_{j_2} - 2a_{j_2} + 2a_{j_3} - 0\\
    &= -(a_{j_1} + a_{j_2} - 2a_{j_3}) \neq 0,
\end{align*}
because $A$ is $3$-AP-free, as desired.
\end{proof}

We thus would like to know for a given $n$, what is the minimum field size $q$ needed? If $q$ is prime, the answer is $q = n \cdot 2^{O(\sqrt{\log n})}$~\cite{behrend1946sets,moser1953non,elkin2011improved}, which is also essentially optimal~\cite{kelley2023strong}. See~\cite{gopi2018Maximally} for a more thorough discussion of the literature. Thus, as a corollary of Theorem~\ref{thm:3-AP-con}, we have the following result.

\begin{corollary}\label{cor:m=3}
If $q = n \cdot 2^{O(\sqrt{\log n})}$ is an odd prime, then there is an MR $(m=3, n, a=1, b=1, h=1)$ grid code over $\F_q$.
\end{corollary}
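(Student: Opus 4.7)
The plan is to apply Theorem~\ref{thm:3-AP-con} directly, obtaining the required 3-AP-free subset of $\F_q$ from the Behrend-Elkin construction followed by embedding into $\F_q$. First, by Behrend's classical construction~\cite{behrend1946sets} as improved by Elkin~\cite{elkin2011improved}, for every integer $M$ there is a 3-AP-free subset of $\{0, 1, \ldots, M-1\} \subseteq \mathbb{Z}$ of size at least $M \cdot 2^{-O(\sqrt{\log M})}$. Inverting this bound, for every $n$ there exists $M = n \cdot 2^{O(\sqrt{\log n})}$ together with an integer 3-AP-free set $A \subseteq \{0, 1, \ldots, M-1\}$ of size exactly $n$.

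Next, because we are given that $q = n \cdot 2^{O(\sqrt{\log n})}$ is an odd prime, we may absorb constants into the $O(\cdot)$ (and, if necessary, pass to a slightly larger prime in the same range via Bertrand's postulate) to ensure $q > 2M$. Under the natural inclusion $\{0, 1, \ldots, M-1\} \hookrightarrow \F_q$, the set $A$ remains 3-AP-free in $\F_q$: if distinct $x, y, z \in A$ satisfied $x + y \equiv 2z \pmod{q}$, then because $0 \le x + y < 2M < q$ and $0 \le 2z < 2M < q$, the congruence would force $x + y = 2z$ as integers, contradicting the 3-AP-freeness of $A$ over $\mathbb{Z}$.

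Finally, apply Theorem~\ref{thm:3-AP-con} to the embedded 3-AP-free set $A \subseteq \F_q$ of size $n$, which immediately yields the desired MR $(m=3, n, a=1, b=1, h=1)$ grid code over $\F_q$. I do not anticipate any substantive obstacle: the only delicate point is the preservation of 3-AP-freeness under reduction modulo $q$, and this is handled by the requirement $q > 2M$, which follows from the quantitative form of the hypothesis on $q$.
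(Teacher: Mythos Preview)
Your proposal is correct and follows essentially the same approach as the paper: cite the Behrend--Elkin bound to obtain a 3-AP-free set of size $n$ inside $\{0,\hdots,M-1\}$ with $M = n\cdot 2^{O(\sqrt{\log n})}$, embed it in $\F_q$, and invoke Theorem~\ref{thm:3-AP-con}. The paper states this as an immediate corollary without spelling out the embedding argument, so your added justification that $q > 2M$ preserves 3-AP-freeness is extra detail rather than a different route (the parenthetical about Bertrand's postulate is unnecessary, since $q$ is given by hypothesis).
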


We remark that for odd prime power $q = p^d$, one can also beat $n^2$ based on the best known constructions to the \emph{cap set} problem~\cite{romera2024mathematical}, although such constructions have strictly worse asymptotics than in the prime case~\cite{ellenberg2017large,croot2017progression}.

\section{Field-size Lower Bounds}\label{sec:lower-bounds}

Now that we have established a few different constructions, we now turn to limitations on possible constructions. First, we study the limits of the Gabidulin family of constructions. 

\subsection{\texorpdfstring{$n^{\ceil{h/2}}$}{pow(n,ceil(h/2))} Lower Bound for the Gabidulin Construction}
A simple implication of Corollary~\ref{cor:cycle-ind} gives us a lower bound for the field size of the Gabidulin construction.

\begin{theorem}\label{thm:gab-lower}
    If a Gabidulin construction is used to construct the parity check matrix of an MR$(m,n,a=1,b=1,h)$ code with $m \ge 2$ then the field size must be at least $\binom{n}{\lceil h/2 \rceil}$.
\end{theorem}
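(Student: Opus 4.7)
The plan is to extract a scalar family $\mu_j := \gamma(1,j) - \gamma(2,j) \in \F_q$ for $j \in [n]$ from the Gabidulin evaluation points in rows $1$ and $2$, prove that any $h+1$ of these are $\F_p$-affinely independent, and then lower bound $q$ by the number of distinct $k$-subset sums of the $\mu_j$'s, where $k := \lceil h/2 \rceil$.

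\textbf{Step 1 (affine independence).} For each pivot column $j^* \in [n]$, I would use the ``double-star'' spanning tree
\[
T_{j^*} := \{(1, j) : j \in [n]\} \;\cup\; \{(i, j^*) : i \in \{2, \ldots, m\}\}
\]
of $[m] \times [n]$. For each row-$2$ non-tree edge $(2, j)$ with $j \neq j^*$, the fundamental cycle $C_{T_{j^*}, (2, j)} = \{(1, j), (2, j), (2, j^*), (1, j^*)\}$ has $\gamma$-sum equal to $\mu_j - \mu_{j^*}$. Applying Corollary~\ref{cor:cycle-ind} to any $h$ such non-tree edges $(2, j_1), \ldots, (2, j_h)$ (available whenever $h \le n-1$), the values $\mu_{j_\ell} - \mu_{j^*}$ for $\ell \in [h]$ are $\F_p$-linearly independent; equivalently, $\mu_{j^*}, \mu_{j_1}, \ldots, \mu_{j_h}$ are $\F_p$-affinely independent. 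Varying $j^*$ and the $j_\ell$'s, any $h+1$ of $\mu_1, \ldots, \mu_n$ are $\F_p$-affinely independent.

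\textbf{Step 2 (distinct subset sums).} For each $k$-subset $S \subseteq [n]$, define $\sigma_S := \sum_{j \in S} \mu_j \in \F_q$. If $\sigma_S = \sigma_{S'}$ for two distinct $k$-subsets $S, S'$, then
\[
\sum_{j \in S \setminus S'} \mu_j \;-\; \sum_{j \in S' \setminus S} \mu_j \;=\; 0
\]
is a nontrivial $\F_p$-affine relation (the $\pm 1$ coefficients sum to $|S| - |S'| = 0$) supported on the $|S \triangle S'| \le 2k \le h+1$ values $\mu_j$ with $j \in S \triangle S'$, contradicting Step~1. Hence the $\binom{n}{k}$ values $\sigma_S$ are pairwise distinct elements of $\F_q$, yielding $q \ge \binom{n}{\lceil h/2 \rceil}$.

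\textbf{Main obstacle.} The key design choice is the double-star $T_{j^*}$, engineered precisely so that the row-$2$ fundamental cycle sums collapse to the one-parameter family $\mu_j - \mu_{j^*}$; everything else is counting. A minor edge case is $h > n-1$, when $T_{j^*}$ lacks $h$ row-$2$ non-tree edges: here one runs Step~1 with $h' := n-1$ to conclude that all $n$ of the $\mu_j$'s are $\F_p$-affinely independent, which still drives Step~2 because $|S \triangle S'| \le n$ automatically and the bound $\binom{n}{\lceil h/2 \rceil}$ is dominated by $\binom{n}{\lfloor n/2 \rfloor}$ in this range. Characteristic $2$ poses no issue, as the $\pm 1$ coefficients remain nonzero over $\F_2$.
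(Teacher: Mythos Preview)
Your proof is correct and follows essentially the same approach as the paper: both extract the differences $\mu_j = \gamma(1,j) - \gamma(2,j)$, use a star-shaped spanning tree pivoted at one column so that the row-$2$ fundamental $4$-cycle sums are exactly $\mu_j - \mu_{j^*}$, invoke Corollary~\ref{cor:cycle-ind} for their $\F_p$-linear independence, and then conclude via distinctness of $\lceil h/2\rceil$-subset sums. Your packaging in terms of $\F_p$-affine independence of any $h+1$ of the $\mu_j$'s handles both parities of $h$ at once and explicitly covers the edge case $h > n-1$, which the paper leaves implicit.
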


Note for $h=1$, the Gabidulin construction is equivalent to a general construction, so this lower bound applies in general in that case.

\begin{corollary}\label{cor:mge2lb}
If $m \ge 2$, then $q(m,n,1,1,1) \ge n$.
\end{corollary}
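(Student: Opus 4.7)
The plan is to dispatch this corollary by observing that, in the special case $h=1$, the Gabidulin ansatz is not actually a restriction but rather the generic form of an $h=1$ parity check, and then invoking Theorem~\ref{thm:gab-lower} at $h=1$.

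First I would note that when $h=1$, each global parity-check vector $c^{i,j} \in \F_q^h$ is just a scalar $c^{i,j} \in \F_q$. The definition of a Gabidulin construction demands $c^{i,j} = (\gamma_{i,j}, \gamma_{i,j}^p, \ldots, \gamma_{i,j}^{p^{h-1}})$, which for $h=1$ collapses to the single equation $c^{i,j} = \gamma_{i,j}$. Hence given \emph{any} parity check matrix $H$ for an MR$(m,n,1,1,1)$ grid code over $\F_q$ (with $q = p^d$ some prime power), one may simply set $\gamma_{i,j} := c^{i,j}$ and thereby view $H$ as a Gabidulin construction in the sense of Section~3.1. In other words, the class of Gabidulin constructions coincides with the class of all grid-code parity-check matrices when $h=1$, so any lower bound proved for the former applies to the latter.

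Second, I would apply Theorem~\ref{thm:gab-lower} directly with $h=1$ and $m \ge 2$. That theorem states that any Gabidulin MR$(m,n,1,1,h)$ construction requires field size at least $\binom{n}{\lceil h/2\rceil}$. Plugging in $h=1$ gives $\binom{n}{\lceil 1/2\rceil} = \binom{n}{1} = n$. Combined with the observation of the previous paragraph, this yields $q(m,n,1,1,1) \ge n$, which is exactly the corollary.

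There is essentially no obstacle here once Theorem~\ref{thm:gab-lower} is available: the only ``content'' is the structural remark that $h=1$ forces Gabidulin form. I would make sure to state this remark explicitly (as it is already flagged in the paragraph immediately preceding the corollary) so the reader sees that the inequality is not just a statement about a restricted family of constructions but about all MR$(m,n,1,1,1)$ grid codes whatsoever.
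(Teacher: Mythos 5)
Your proposal is correct and is exactly the route the paper takes: the sentence preceding the corollary makes the same observation that for $h=1$ every construction is a Gabidulin construction (since $c^{i,j}=\gamma_{i,j}$ is a single scalar), and the corollary then follows from Theorem~\ref{thm:gab-lower} with $\binom{n}{\lceil 1/2\rceil}=n$. The paper additionally records a self-contained two-line argument in the remark after the corollary (distinctness of the differences $\gamma(1,i)-\gamma(2,i)$), but that is just an unwinding of the same idea.
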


\begin{proof}[Proof of Theorem~\ref{thm:gab-lower}]
For the lower bound, we assume without loss of generality that $m=2$. Label the two left vertices as $\alpha$ and $\beta$ and label the right vertices as $1,\hdots,n$. If $h$ is odd then we claim that for any $(h+1)/2$ vertices $j_1,\hdots,j_{(h+1)/2} \in [n]$ the sum of $\gamma\change{(}\alpha,j_1\change{)}-\gamma\change{(}\beta,j_1\change{)}+\cdots+\gamma\change{(}\alpha,j_{(h+1)/2}\change{)}-\gamma\change{(}\beta,j_{(h+1)/2}\change{)}$ is distinct for distinct choices of right vertices. We note this claim would then force the field size to be at least $\binom{n}{(h+1)/2}$.

We prove this by contradiction. If two such sums are equal then for two disjoint set of right vertices $i_1,\hdots,i_k$ and $j_1,\hdots,j_k$ we must have
\begin{equation}\label{eq-gabLowSum}
\gamma\change{(}\alpha,i_1\change{)}-\gamma\change{(}\beta,i_1\change{)}+\hdots+\gamma\change{(}\alpha,i_{k}\change{)}-\gamma\change{(}\beta,i_{k}\change{)}-(\gamma\change{(}\alpha,j_1\change{)}-\gamma\change{(}\beta,j_1\change{)}+\hdots+\gamma\change{(}\alpha,j_{k}\change{)}-\gamma\change{(}\beta,j_{k}\change{)})=0\end{equation} for $k\le (h+1)/2$. \change{Let $T$ be the following spanning tree of the subgraph induced by the aforementioned edges:
\[
  T = \{(\alpha,i_1), \hdots, (\alpha,i_k), (\alpha, j_1), \hdots, (\alpha, j_k), (\beta, i_1)\}.
\]
For any vertex $x \in \{i_2, \hdots, i_k, j_1, \hdots, j_k\}$, the edge $(\beta, x)$ forms with $T$ a unique cycle \[C_x := \{(\alpha,i_1), (\beta,i_1),(\beta, x), (\alpha, x)\}.\]
In particular (after making an arbitrary sign choice), we have that
\[
\gamma(C_x) = \gamma(\alpha,i_1) - \gamma(\beta,i_1) + \gamma(\beta, x) - \gamma(\alpha, x).
\]
Since our Gabidulin construction is MR $(n, m, a=1, b=1, h=2k-1)$, we have that $\{\gamma(C_x) : x \in \{i_2, \hdots, i_k, j_1, \hdots, j_k\}\}$ are linearly independent. In particular,
\begin{align*}
0 &\neq \sum_{x \in \{j_1, \hdots, j_k\}} \gamma(C_x) - \sum_{x \in \{i_2, \hdots, i_k\}} \gamma(C_x)\\
&= \gamma(\alpha,i_1) - \gamma(\beta,i_1) + \sum_{x \in \{i_2, \hdots, i_k\}} (\gamma(\alpha, x) - \gamma(\beta, x)) - \sum_{x \in \{j_1, \hdots, j_k\}} (\gamma(\alpha, x) - \gamma(\beta, x))\\
&= 0, & (\text{by \eqref{eq-gabLowSum}})
\end{align*}
}
a contradiction.

A similar argument holds for even $h$ \change{by picking $k=h/2$ and noting that $h \ge 2k-1$}.
\end{proof}

\begin{remark}
To directly prove Corollary~\ref{cor:mge2lb}, note that the set $\{\gamma(1,i)-\gamma(2,i) : i \in [n]\}$ must consist of $n$ distinct elements. Otherwise, if $\gamma(1,i)-\gamma(2,i) = \gamma(1,j) - \gamma(2,j)$ for some $i \neq j$, we have that the cycle $\{(1,i),(2,i),(2,j),(1,j)\}$ has its sum equal to zero, a contradiction. Since $\{\gamma(1,i)-\gamma(2,i) : i \in [n]\}$ can have at most $q$ elements, we must have that $q \ge n$.
\end{remark}

\subsection{General Lower Bounds}

Next, we study field-size lower bounds which apply to any MR grid code. Here we make the assumption that $m$ is sufficiently large compared to $h$.

\begin{theorem}\label{thm:lb-general}
Assume that $h \ge 3$ and $m \ge (h-1)^2$. If there exists an MR $(m, n, a=1, b=1, h)$ grid code over $\F_q$, then $q \ge \Omega(n^{h-1}/h^{2h-1})$.
\end{theorem}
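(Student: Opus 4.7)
The plan is to adapt the counting-based lower bound technique of Gopi et al.~\cite{gopi2018Maximally} for MR locally recoverable codes to the MR grid-code setting, using Proposition~\ref{prop:cycle-ind-h'} as the bridge between MR grid codes and combinatorial linear-independence constraints on cycle sums. The goal is to construct $\Omega(n^{h-1})$ ``test configurations''---each parameterized by an $(h-1)$-tuple of columns---whose associated cycle sums in $\F_q^h$ must be sufficiently distinct, forcing the field-size lower bound.

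First I would fix a convenient spanning tree $T \subseteq [m] \times [n]$, such as the double star with row $1$ and column $1$ as hubs, so that each edge $(i,j) \notin T$ with $i \ne 1, j \ne 1$ determines a 4-cycle $C_{T,(i,j)}$ and corresponding cycle sum $\sigma(i,j) := \Sigma_H(C_{T,(i,j)}) \in \F_q^h$. By Proposition~\ref{prop:cycle-ind}, any $h$ such $\sigma$-values are linearly independent over $\F_q$. Next, I would use the hypothesis $m \ge (h-1)^2$ to partition rows $\{2, \ldots, m\}$ into $h-1$ disjoint ``row-blocks'' $R_1, \ldots, R_{h-1}$, each of size $h-1$. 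For each ordered tuple $\vec{j} = (j_1, \ldots, j_{h-1})$ of distinct columns in $\{2, \ldots, n\}$, I would construct $h$ cycles $C_1(\vec{j}), \ldots, C_h(\vec{j})$ such that (i) the $\ell$-th cycle is ``localized'' to the row-block $R_\ell$ and column $j_\ell$ (with the $h$-th cycle anchored at column $1$), (ii) each cycle contains an edge not appearing in any other, and (iii) their union is acyclic after deleting $h$ edges, so that Proposition~\ref{prop:cycle-ind-h'} applies and their cycle sums are linearly independent in $\F_q^h$.

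The crux of the argument is to show that the map $\vec{j} \mapsto (\Sigma_H(C_\ell(\vec{j})))_{\ell=1}^h$ is ``nearly injective'': two distinct tuples $\vec{j} \ne \vec{j}'$ can give rise to the same $h$-tuple of cycle sums only when identified by a symmetry group of size at most $h^{O(h)}$ (coming from permutations of tuple entries, projective rescalings of each basis vector, and internal choices within each row-block). Indeed, a coincidence $\Sigma_H(C_\ell(\vec{j})) = \Sigma_H(C_\ell(\vec{j}'))$ produces the symmetric-difference cycle $C_\ell(\vec{j}) \triangle C_\ell(\vec{j}')$ with vanishing cycle sum, which Proposition~\ref{prop:cycle-ind-h'} forbids. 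By construction the image lies in a slice of $(\F_q^h)^h$ parameterized by $O(q)$ scalars (each cycle's sum being determined, after normalization, by a single element of $\F_q$), so the $\binom{n-1}{h-1} = \Omega(n^{h-1}/h^{h-1})$ distinct tuples collapse into at most $O(q \cdot h^{O(h)})$ classes, yielding $q \ge \Omega(n^{h-1}/h^{2h-1})$. The main obstacle is this careful cycle engineering together with the symmetry bookkeeping: a cleaner alternative would be to realize the row-block-restricted $\sigma$-array as the parity-check matrix of a higher-order MDS (MDS-$h$) code of length $\Theta(n)$ and invoke the Brakensiek-Gopi-Makam-type~\cite{brakensiek2023improved} lower bound $q \ge \Omega_h(n^{h-1})$ directly, with the $(h-1)^2$ row budget being exactly what such a reduction needs.
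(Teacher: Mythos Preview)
Your high-level instincts are right---use the $(h-1)^2$ row budget to build disjoint row-blocks, manufacture families of cycles whose sums are forced to be independent via Proposition~\ref{prop:cycle-ind-h'}, and then appeal to the Gopi--Gopalan--Yekhanin machinery. But the actual counting step you sketch has a genuine gap. You assert that ``the image lies in a slice of $(\F_q^h)^h$ parameterized by $O(q)$ scalars (each cycle's sum being determined, after normalization, by a single element of $\F_q$).'' This is not true for a general MR grid code: a cycle sum is an arbitrary vector in $\F_q^h$, and even projectively it carries $h-1$ degrees of freedom over $\F_q$, not one. (You may be implicitly assuming a Gabidulin-type construction, where $c^{i,j}$ is determined by a single $\gamma_{i,j}$; the theorem is about \emph{all} constructions.) Without this, your injectivity argument only yields $|\PF_q^{h-1}|^{O(1)} \gtrsim n^{h-1}$, i.e.\ $q \gtrsim n$, not $q \gtrsim n^{h-1}$. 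The symmetric-difference argument you mention also needs care: in odd characteristic, $\Sigma_H(C)-\Sigma_H(C')$ is not the cycle sum of $C\triangle C'$, and $C\triangle C'$ need not be a single simple cycle.

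The paper's proof avoids this by applying Proposition~\ref{prop:q-ind} (Lemma~3.1 of~\cite{gopi2018Maximally}) directly rather than reinventing it. Concretely, it builds $h$ overlapping row-intervals $I_1,\ldots,I_h\subseteq[m]$ of size $h-1$ each, partitions $[n]$ into $h(h-1)$ column-buckets of size $N=\lfloor n/(h(h-1))\rfloor$, and for each $\ell\in[h]$ defines a family $\cF_\ell$ of $2(h-1)$-cycles on $I_\ell$ indexed by $(h-1)$-tuples of columns from $\ell$'s buckets, so $|\cF_\ell|\ge N^{h-1}$. Two claims, each proved via Proposition~\ref{prop:cycle-ind-h'}/\ref{prop:cycle-ind}, establish that (i) distinct cycles in the \emph{same} $\cF_\ell$ have projectively distinct sums, and (ii) any cross-section $(C_1,\ldots,C_h)\in\cF_1\times\cdots\times\cF_h$ has linearly independent sums. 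Now Proposition~\ref{prop:q-ind} applied to $A_\ell=\{\Sigma_H(C):C\in\cF_\ell\}$ gives $q\ge N^{h-1}/(h-1)-4=\Omega(n^{h-1}/h^{2h-1})$ in one stroke. The point is that the heavy lifting---squeezing a bound of $q\ge\Omega(N)$ (not merely $q^{h-1}\ge\Omega(N)$) out of $h$ large sets in mutual general position---is already packaged inside Proposition~\ref{prop:q-ind}; you should invoke it rather than try to reproduce it by an injectivity count.
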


The proof will make use of the following fact due to Gopi et al.~\cite{gopi2018Maximally}. We say that a set of nonzero points $A \subseteq \F_q^h$ are \emph{projectively distinct} if every pair of distinct vectors $a, a' \in A$ are linearly independent. In other words, $A$ is projectively distinct if the elements of $A$ map to distinct points in the projective plane $\PF_q^{h-1}$.

\begin{proposition}[Lemma~3.1~\cite{gopi2018Maximally}]\label{prop:q-ind}
Assume that disjoint sets $A_1, \hdots, A_h \subseteq \PF_q^{h-1}$ are projectively distinct and each have size at least $N$. Further assume that there is no choice of $a_1 \in A_1, \hdots, a_h \in A_h$ which are linearly dependent. Then, $q \ge \frac{N}{h-1}-4 = \Omega(N/h).$
\end{proposition}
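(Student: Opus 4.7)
The plan is to prove the bound by induction on $h$. The base case $h=2$ is immediate: two disjoint subsets of $\PF_q^1$ satisfy $|A_1|+|A_2|\le q+1$, so $2N\le q+1$, giving $q\ge 2N-1\ge N-4$.

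For the inductive step, I would first establish a structural sublemma: for $h\ge 3$, any projective line $L\subseteq\PF_q^{h-1}$ intersects at most two of the sets $A_1,\ldots,A_h$. Indeed, if $L$ met three distinct sets, then choosing one point from each of those three and extending with any transversal of the remaining $h-3$ sets would produce an $h$-tuple containing three vectors in the $2$-dimensional subspace $L$, hence spanning at most $2+(h-3)=h-1$ dimensions --- contradicting the linear-independence hypothesis.

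The main reduction is via projection. Fix a base point $a^*\in A_h$ and let $\pi_{a^*}:\PF_q^{h-1}\setminus\{[a^*]\}\to\PF_q^{h-2}$ be the quotient by $\langle a^*\rangle$. Standard lift-to-contradiction arguments (enabled by the sublemma) show that $\pi_{a^*}(A_1),\ldots,\pi_{a^*}(A_{h-1})$ are pairwise disjoint in $\PF_q^{h-2}$ and inherit the transversal-independence condition. The inductive hypothesis then gives $q\ge|\pi_{a^*}(A_i)|/(h-2)-4$ for each $i$, so it suffices to choose $a^*$ so that each $|\pi_{a^*}(A_i)|\ge N-O(q)$; then $q\ge(N-O(q))/(h-2)-4$ rearranges to $q\ge N/(h-1)-O(1)$, which is absorbed into the additive slack of the claim.

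The main obstacle is obtaining this size bound on the projections. A pair $\{a,a'\}\subseteq A_i$ collapses iff $a^*\in L_{a,a'}$, so the total collision count summed over base points, $\sum_{a^*\in A_h}C_{a^*}$, equals $\sum_L\binom{|L\cap A_i|}{2}\,|L\cap A_h|$ over projective lines $L$. The sublemma ensures only lines meeting exactly $A_i$ and $A_h$ contribute, with $|L\cap A_i|+|L\cap A_h|\le q+1$; moreover each $A_i\times A_h$ pair lies on a unique line, giving $\sum_L|L\cap A_i|\cdot|L\cap A_h|=|A_i||A_h|$. Converting these constraints into a sharp bound $\sum_{a^*}C_{a^*}\le O(q|A_h|)$ --- so that averaging yields some $a^*$ with $C_{a^*}\le O(q)$ --- requires a delicate case analysis splitting lines by the size of $|L\cap A_i|$, together with a second invocation of the transversal-independence hypothesis across the other $h-2$ sets to preclude configurations where both $|L\cap A_i|$ and $|L\cap A_h|$ are large simultaneously. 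This extra quantitative argument is the technical heart of the proof.
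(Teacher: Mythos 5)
First, a point of reference: the paper does not prove this proposition at all --- it is imported verbatim as Lemma~3.1 of \cite{gopi2018Maximally} --- so there is no in-paper proof to compare against. Your outline (induction on $h$; the sublemma that every projective line meets at most two of the $A_i$; projection from a point $a^*\in A_h$) is the natural route and is in the spirit of the cited argument. The base case, the sublemma, and the observations that the projected sets $\pi_{a^*}(A_1),\hdots,\pi_{a^*}(A_{h-1})$ remain pairwise disjoint and inherit transversal independence are all correct.

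The gap is exactly where you flag it, and it is not merely a deferred computation: the averaging inequality you propose to prove, $\sum_{a^*\in A_h}C_{a^*}=\sum_L\binom{|L\cap A_i|}{2}\,|L\cap A_h|\le O(q\,|A_h|)$, is \emph{false} under the hypotheses. Take $h=3$ and let $A_1$ and $A_3$ be disjoint subsets of a single line $L_0$, each of size $N=\lfloor (q+1)/2\rfloor$, with $A_2$ disjoint from $L_0$; every transversal is then non-collinear, yet the only contributing line is $L_0$ and the sum equals $\binom{N}{2}N=\Theta(q^3)$, while $q|A_3|=\Theta(q^2)$. The quantity the induction actually needs is not the number of collapsing pairs but the image-size deficit $|A_i|-|\pi_{a^*}(A_i)|=\sum_{L\ni a^*}(|L\cap A_i|-1)^+$, and what must be shown is that a single $a^*\in A_h$ has deficit at most $q+O(1)$ \emph{simultaneously for every} $i\in[h-1]$ (an additive loss of $q+4$ per step is what telescopes to $q\ge N/(h-1)-4$; note that in the example above the deficit is $N-1\le q$ even though the pair count explodes, so the image collapsing to a single point is harmless for the additive accounting). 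Even in this corrected form the existence of a good $a^*$ does not follow from the incidence identities you list: if some $A_i$ had many more than $2q$ points on a projective plane $P$ and $A_h\subseteq P$, then every $a^*\in A_h$ would lie on only $q+1$ lines of $P$ and hence incur deficit at least $|A_i|-(q+1)\gg q$ against $A_i$. Excluding such concentrated configurations genuinely requires the transversal-independence hypothesis across the remaining sets, which your sketch only gestures at. Until that step is supplied the proof is incomplete; I would either work it out in full or simply cite \cite{gopi2018Maximally} as the paper does.
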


In our application, we let $A_1, \hdots, A_h$ each correspond to a family of cycle sums where each cycle has length $2(h-1)$. To ensure projective distinctness, we ensure the $h$ cycles have (mostly) disjoint sets of vertices, leading to our requirement that $m \ge (h-1)^2$.

\begin{proof}[Proof of Theorem~\ref{thm:lb-general}]
Without loss of generality, we assume that $m = (h-1)^2$. Let $I_1, \hdots, I_h \subseteq [m]$ be the following intervals.
\begin{align*}
I_1 &:= \{1, 2, \hdots, h-1\}\\
I_2 &:= \{h-1, h+1, \hdots, 2h-3\}\\
I_3 &:= \{2h-3, 2h, \hdots, 3h-4\}\\
&\vdots\\
I_h &:= \{(h-1)(h-2)+1, \hdots, (h-1)^2\}.
\end{align*}
In particular, each interval has size $\change{h-1}$ and for all $\ell \in [h-1]$, we have that $I_{\ell} \cap I_{\ell+1} = \{\ell(h-2)+1\}$. Furthermore, let $J_1, \hdots, J_{h(h-1)} \subseteq [n]$ be a partition of $[n]$ such that each set has size at least $N := \lfloor \frac{n}{h(h-1)}\rfloor$.

For all $\ell \in [h]$, we define a family $\cF_{\ell}$ of cycles as follows. Let $i_1, \hdots, i_{h-1}$ be an enumeration of the left vertices of $I_\ell$. For any $j_1 \in J_{(\ell-1)(h-1)+1}, j_2 \in J_{(\ell-1)(h-1)+2}, \hdots, j_{h-1} \in J_{\ell (h-1)}$, we define a cycle $C_{\ell, j_1, \hdots, j_{h-1}} \in \cF_{\ell}$ as follows:
\[
    C_{\ell, j_1, \hdots, j_{h-1}} := \{(i_1, j_1), (i_2, j_1), (i_2, j_2), \hdots, (i_{h-1}, j_{h-1}), (i_1, j_{h-1})\}.
\]
For this cycle to be simple, we need at least $|I_\ell| \ge 2$, which requires $h \ge 3$. Note that $\cF_\ell$ has at least $N^{h-1}$ elements. We now have the following two key claims.

\begin{claim}\label{claim:proj-dist}
For any distinct $C_{\ell, j_1, \hdots, j_{h-1}}, C_{\ell, j'_1, \hdots, j'_{h-1}} \in \cF_\ell$, we have that the vectors $\Sigma_H(C_{\ell, j_1, \hdots, j_{h-1}})$ and $\Sigma_H(C_{\ell, j'_1, \hdots, j'_{h-1}}\change{)}$ are linearly independent (i.e., projectively distinct).
\end{claim}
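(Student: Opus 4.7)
The plan is to apply Proposition~\ref{prop:cycle-ind-h'} with $h' = 2$ directly to the two cycles $C := C_{\ell, j_1, \ldots, j_{h-1}}$ and $C' := C_{\ell, j'_1, \ldots, j'_{h-1}}$. Since $h \ge 3$, we have $h' = 2 \le h$, so the proposition will yield the linear independence of $\Sigma_H(C)$ and $\Sigma_H(C')$, which is precisely the projective distinctness we want. All that remains is to verify the two structural conditions of that proposition.

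For condition (2), recall that the right vertices of $C$ satisfy $j_k \in J_{(\ell-1)(h-1)+k}$, and similarly for $C'$, and that the blocks $J_1, \ldots, J_{h(h-1)}$ are pairwise disjoint. Since $C \neq C'$, some index $k$ has $j_k \neq j'_k$; the block structure then forces $j_k \notin \{j'_1, \ldots, j'_{h-1}\}$, so $j_k$ is not even a vertex of $C'$. Hence the edge $(i_k, j_k) \in C$ is private to $C$, and symmetrically some edge of $C'$ is private to $C'$.

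For condition (1), I would bound the cyclomatic number of $C \cup C'$ by a direct count. Let $r := |\{k \in [h-1] : j_k = j'_k\}|$, and note $r \le h-2$ because $C \neq C'$. Whenever $j_k = j'_k$, both incident edges $(i_k, j_k)$ and $(i_{k+1}, j_k)$ (indices cyclic with $i_h := i_1$) lie in both $C$ and $C'$, because the two cycles are built from the \emph{same} ordered sequence of left vertices $i_1, \ldots, i_{h-1}$. Thus $|C \cap C'| = 2r$, giving $|C \cup C'| = 4(h-1) - 2r$. The vertex set of $C \cup C'$ consists of $h-1$ left vertices together with $2(h-1) - r$ distinct right vertices, and is connected since every vertex of $I_\ell$ appears in both cycles. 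The cyclomatic number is therefore $|E| - |V| + 1 = h - r \le h$, so any spanning tree $T$ of $C \cup C'$ satisfies $|(C \cup C') \setminus T| \le h$, verifying condition (1).

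The only real subtlety is the bookkeeping behind $|C \cap C'| = 2r$: this exploits the fact that both cycles traverse $I_\ell$ in the identical cyclic pattern, so any coincidence $j_k = j'_k$ automatically forces coincidence of the two incident edges. Once this observation is in hand, both hypotheses of Proposition~\ref{prop:cycle-ind-h'} drop out by straightforward counting, and the claim follows.
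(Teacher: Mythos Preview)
Your proof is correct and follows essentially the same approach as the paper: both invoke Proposition~\ref{prop:cycle-ind-h'} with $h'=2$, and both verify condition~(1) by counting that $|C\cup C'|$ minus the size of a spanning tree equals $h-r$ (your $r$ coincides with the paper's $|J|$, since the block structure forces $j_k=j'_{k'}$ only when $k=k'$). The only cosmetic difference is that for condition~(2) the paper argues nonemptiness of the symmetric difference from equal cardinality plus distinctness, whereas you exhibit a private edge explicitly via the block structure; both are immediate.
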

\begin{proof}
Let $E' := C_{\ell, j_1, \hdots, j_{h-1}} \cup C_{\ell, j'_1, \hdots, j'_{h-1}}$. By Proposition~\ref{prop:cycle-ind-h'}, it suffices to prove there is an acyclic subgraph $T \subseteq E'$ for which $|E' \setminus T| \le h$ and that $C_{\ell, j_1, \hdots, j_{h-1}} \setminus C_{\ell, j'_1, \hdots, j'_{h-1}}$ and $C_{\ell, j_1, \hdots, j_{h-1}} \setminus C_{\ell, j'_1, \hdots, j'_{h-1}}$ are both nonempty. The nonemptyness condition follows from the fact that the two cycles have the exact same number of edges but are distinct.

For the acyclic subgraph condition. Let $J = \{j_1, \hdots, j_{h-1}\} \cap \{j'_1, \hdots, j'_{h-1}\}$. One can count that $|E'| = 4(h-1) - 2|J|$, since every $j \in J$ appears in the same edges in both cycles. Let $k =  |\{j_1, \hdots, j_{h-1}\} \cup \{j'_1, \hdots, j'_{h-1}\}| = 2h-2 - |J|$.

Since $I_\ell$ has $h-1$ left vertices and $E'$ is a connected graph, $E'$ has a spanning tree $T$ with $k + h - 2 = 3h - 4 - |J|$ edges. Thus, $|E' \setminus T| = |E'| - |T| = (4h - 4 - 2|J|) - (3h - 4 - |J|) = h - |J| \le h$, as desired.
\end{proof}

\begin{claim}\label{claim:lin-ind}
For any choice of $C_1 \in \cF_1, \hdots, C_h \in \cF_h$, we have that $\Sigma_H(C_1), \hdots, \Sigma_H(C_h)$ are linearly independent.
\end{claim}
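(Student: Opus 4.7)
The plan is to apply Proposition~\ref{prop:cycle-ind-h'} with $h' = h$ to the cycles $C_1, \ldots, C_h$. Two hypotheses must be checked: (2) each $C_\ell$ has an edge appearing in no other $C_{\ell'}$, and (1) the union $E' := C_1 \cup \cdots \cup C_h$ admits an acyclic subgraph $T$ with $|E' \setminus T| \le h$.

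Hypothesis (2) is essentially immediate from the construction. The cycle $C_\ell$ uses only right vertices drawn from $J_{(\ell-1)(h-1)+1} \cup \cdots \cup J_{\ell(h-1)}$, and the sets $J_1, \ldots, J_{h(h-1)}$ partition $[n]$. Hence the right-vertex sets of distinct $C_\ell$'s are disjoint, so every edge of $C_\ell$ has a right endpoint appearing in no other $C_{\ell'}$. Any edge of $C_\ell$ serves as the required private edge.

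Hypothesis (1) reduces to a counting argument. Because right vertices are disjoint across cycles, the cycles $C_1, \ldots, C_h$ are pairwise edge-disjoint, so $|E'| = \sum_\ell |C_\ell| = 2h(h-1)$. The right-vertex set of $E'$ has size $h(h-1)$, and the left-vertex set is $I_1 \cup \cdots \cup I_h$, which by the overlap structure ($|I_\ell \cap I_{\ell+1}| = 1$ for consecutive $\ell$, disjoint otherwise) has size $h(h-1) - (h-1) = (h-1)^2$. Thus $E'$ has $(h-1)(2h-1)$ vertices. Since each $C_\ell$ contains every vertex of $I_\ell$ and $C_\ell, C_{\ell+1}$ share the vertex in $I_\ell \cap I_{\ell+1}$, the graph $E'$ is connected. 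Hence it has a spanning tree $T$ with exactly $(h-1)(2h-1) - 1$ edges, giving $|E' \setminus T| = 2h(h-1) - (h-1)(2h-1) + 1 = (h-1) + 1 = h$, which matches the bound in (1) with equality.

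With both conditions verified, Proposition~\ref{prop:cycle-ind-h'} immediately yields linear independence of $\Sigma_H(C_1), \ldots, \Sigma_H(C_h)$. The main delicate step is the vertex/edge count in (1): it is tight, with zero slack, so the chain of intervals $I_1, \ldots, I_h$ having pairwise intersection exactly one in consecutive position (and disjoint otherwise) is precisely calibrated to make the argument go through—any more overlap would give too few left vertices to cover all the cycles, and any less would disconnect the union and prevent a single spanning tree of the right size. This is the only place the specific interval structure from the theorem hypothesis $m \ge (h-1)^2$ gets used.
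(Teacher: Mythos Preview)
Your proof is correct and follows essentially the same route as the paper: both verify the hypotheses of Proposition~\ref{prop:cycle-ind-h'} (the paper nominally cites Proposition~\ref{prop:cycle-ind}, but the sufficient condition it states is the same). The only difference is in how condition~(1) is checked: the paper simply deletes one arbitrary edge from each $C_\ell$ and observes that the remainder is acyclic (using that the cycles are edge-disjoint and share at most one vertex pairwise, so the union is a cactus), whereas you take a spanning tree of $E'$ and reach $|E'\setminus T|=h$ via a vertex--edge count. Both are valid; the paper's is a bit more direct.

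One small inaccuracy in your closing commentary: you say that ``any less overlap would disconnect the union and prevent a single spanning tree of the right size,'' but in fact if the $I_\ell$ were pairwise disjoint the argument would still go through---you would take a spanning forest, the component count would rise to $h$, and the computation $|E'|-|V'|+c$ would again equal $h$. The overlap structure of the $I_\ell$ is needed elsewhere (to fit everything into $m=(h-1)^2$ left vertices and to make Claim~\ref{claim:proj-dist} work), not for this claim.
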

\begin{proof}
By Proposition~\ref{prop:cycle-ind}, it suffices to prove that there is an acyclic graph $T \subseteq C_1 \cup \cdots \cup C_h$ such that $|C_\ell \setminus T| = 1$ for all $\ell \in [h]$. By our choice of $I_1, \hdots, I_h$ and $J_1, \hdots, J_{h(h-1)}$, these $h$ cycles have disjoint edge sets. Therefore, if we delete from $C_1 \cup \cdots \cup C_h$ any arbitrary edge of each $C_\ell$ for all $\ell \in [h]$, we are left with an acyclic graph, as desired.
\end{proof}

We can thus, define $A_\ell := \{\Sigma_H(C) : C \in \cF_\ell\}$ for all $\ell \in [h]$. By Claim~\ref{claim:proj-dist}, each $A_\ell$ has size at least $N^{\change{h-1}}$ and its elements are projectively distinct. Furthermore, by Claim~\ref{claim:lin-ind} any choice of $a_1 \in A_1, \hdots, a_h \in A_h$ are linearly independent, and thus $A_1, \hdots, A_h$ are disjoint. We can apply Proposition~\ref{prop:q-ind} to get $q \ge \frac{N^{h-1}}{h-1}-4 = \Omega(n^{h-1} / h^{2h-1})$, as desired.
\end{proof}

\subsection{Near-monotonicity in \texorpdfstring{$h$}{h}}

\change{A general question one may ask about $q(m,n,a,b,h)$ is the dependence on the field size as a function of the number of global parity checks $h$. Intuitively, demanding more global parity checks appears to be a stronger property of the code and thus the field size should grow as a function of $h$. However, rather surprisingly, no such result is known. The primary issue is that if $h > h'$, an MR grid code with parameters $(m,n,a,b,h)$ does not immediately imply an MR grid code with parameters $(m,n,a,b,h')$ as the former code has smaller dimension than the latter, and it is not clear how to make up for the deficit in dimension.}

As mentioned in the introduction, Gopalan et al.~\cite{Gopalan2016} (as stated by Kane et al.~\cite{kane2019independence}) states that
\begin{align*}
q(m,n,a,b,h) \ge q(\min(m-a+1,n-b+1),\min(m-a+1,n-b+1),1,1,1)^{1/h}.
\end{align*}
When $m$ is a constant and $n$ is growing, this lower bound is a constant. We greatly improve on this lower bound with the following result.

\begin{theorem}\label{thm:monotone-h}
For any \change{$m \ge 3$, $h \ge h' \ge 1$, and $n \ge h-h'+2$}, we have that
\[
    q(m, n, 1, 1, h) \ge q(m-2, n-h+h'-1, 1, 1, h')
\]
\end{theorem}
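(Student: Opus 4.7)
The plan is to show that, given an MR $(m,n,1,1,h)$ grid code $H$ over $\F_q$ with global parity vectors $c^{i,j} \in \F_q^h$, one can explicitly construct an MR $(m-2, n-h+h'-1, 1, 1, h')$ grid code $H'$ over the same field, which immediately yields the desired inequality. The new code lives on the subgrid $S := [3, m] \times [1, n-h+h'-1]$, inherits the row and column parities of $H$ restricted to $S$, and uses $h'$ new global parity vectors $d^{i,j} := A\,c^{i+2,j}$ obtained by applying a carefully chosen surjective linear map $A \colon \F_q^h \to \F_q^{h'}$.

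The key design choice is $\ker A$. Writing $n' := n - h + h' - 1$, for each $\ell \in [h-h']$ I would consider the $4$-cycle
\[
    C^*_\ell := \{(2, n'+\ell),\ (1, n'+\ell),\ (1, 1),\ (2, 1)\} \subseteq [m] \times [n],
\]
which lies entirely in rows $1$ and $2$ of the larger grid, and set $s_\ell := \Sigma_H(C^*_\ell) \in \F_q^h$. I would then take $A$ to be any surjective linear map with $\ker A = \Span(s_1, \ldots, s_{h-h'})$. By construction, $H'$ then has the correct parity check structure of an $(m-2, n', 1, 1, h')$ grid code.

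To verify that $H'$ is MR, the plan is to appeal to Proposition~\ref{prop:cycle-ind}: for any spanning tree $T$ of $[m-2] \times [n']$ and any $h'$ extras $e_1, \ldots, e_{h'}$, I need the cycle sums $\Sigma_{H'}(C_{T, e_\ell}) = A\,\Sigma_H(C_{T, e_\ell})$ to be linearly independent in $\F_q^{h'}$. The crux is the claim that the $h$ vectors $\Sigma_H(C_{T, e_1}), \ldots, \Sigma_H(C_{T, e_{h'}}), s_1, \ldots, s_{h-h'}$ are linearly independent in $\F_q^h$; granting this, and because $\ker A$ is spanned by the $s_\ell$, the $h'$ cycle sums are forced to remain linearly independent modulo $\ker A$, hence linearly independent in $\F_q^{h'}$ after applying $A$. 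The joint independence claim in turn follows from Proposition~\ref{prop:cycle-ind-h'} applied to the family of $h$ cycles $\{C_{T, e_\ell}\}_{\ell=1}^{h'} \cup \{C^*_\ell\}_{\ell=1}^{h-h'}$: condition (2) is clear since each $e_\ell$ appears only in $C_{T, e_\ell}$ and each $(2, n'+\ell)$ only in $C^*_\ell$, and condition (1) is witnessed by the acyclic union of the ``tree parts'' of all $h$ cycles---a subset of the subgrid tree $T$ together with the hub edges $\{(1,1),(2,1),(1, n'+1), \ldots, (1, n'+h-h')\}$---whose complement inside $\bigcup_\ell C_{T, e_\ell} \cup \bigcup_\ell C^*_\ell$ is exactly the $h$ selected extras.

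The main obstacle I anticipate is engineering the auxiliary cycles $C^*_\ell$ so that their cycle sums $s_\ell$ are fixed vectors depending only on $H$, and not on the choice of subgrid spanning tree $T$; this is what allows a single projection $A$ to work uniformly across all correctable patterns of the smaller code. The resolution is to place each $C^*_\ell$ entirely outside $S$ and to have them share the common edges $(1,1)$ and $(2,1)$, so that they glue together with any subgrid spanning tree into a single acyclic configuration decorated by exactly $h$ extras, thereby activating Proposition~\ref{prop:cycle-ind-h'}.
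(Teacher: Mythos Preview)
Your proposal is correct and follows essentially the same approach as the paper: reserve two rows and $h-h'+1$ columns, build $h-h'$ auxiliary $4$-cycles there whose cycle sums span the kernel of a projection $\psi:\F_q^h\to\F_q^{h'}$, define the small code's global parities by $d^{i,j}=\psi(c^{i,j})$, and verify the MR property by showing that any $h'$ fundamental cycles in the subgrid together with the $h-h'$ auxiliary cycles have jointly independent cycle sums in $\F_q^h$. The only cosmetic difference is that the paper places its auxiliary cycles on rows $\{m-1,m\}$ and columns $\{n-h+h',\dots,n\}$, keeping them vertex-disjoint from the subgrid (so the union of the two trees is trivially a forest), whereas you let your cycles $C^*_\ell$ share the column vertex $1$ with the subgrid; your acyclicity check then relies on the easy fact that two trees meeting in a single vertex glue to a tree, which is fine.
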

For example, when $m = 4$, $h=2$ and $h'=1$, when we combine Theorem~\ref{thm:monotone-h} with Corollary~\ref{cor:mge2lb}, we have that \[q(4, n, 1, 1, 2) \ge n-2,\] the first superconstant lower bound for these parameters.

\begin{remark}
In the Appendix, we prove a variant of Theorem~\ref{thm:monotone-h} which gives improved lower bounds when $m$ is superconstant. See Theorem~\ref{thm:box-bound}.
\end{remark}

\begin{proof}[Proof of Theorem~\ref{thm:monotone-h}]
Let $H \in \F_q^{(m+n+h-1) \times mn}$ be the parity check matrix of a MR $(m,n,a=1,b=1,h)$ grid code with corresponding parity check vectors $c^{i,j} \in \F_q^h$ for all $(i,j) \in [m] \times [n]$. It suffices to construct the parity check matrix of a MR $(m-2,n-(h-h'+1),a=1,b=1,h=h')$ grid code over the same field. In particular, we construct $d^{i,j} \in \F_q^{h'}$ for $(i,j) \in [m-2] \times [n-(h-h'+1)]$ with the MR property.

For $\ell \in [h-h']$, let $C_\ell \subset [m] \times [n]$ be the following cycle
\[
    C_\ell := \{(m-1, n-h+h'), (m-1, n-h+h'+\ell), (m, n-h+h'+\ell), (m, n-h+h')\}
\]
Note that $C_1 \cup \cdots \cup C_{h-h'} = \{m-1,m\} \times \{n-h+h', \hdots, n\}$. Consider the tree $T = \{m-1\} \times \{n-h+h', \hdots, n\} \cup \{(m, n-h+h')\}$, we have for all $\ell \in [h-h']$ then, $C_\ell \setminus T = \{(m, n-h+h'+\ell)\}$ has size exactly one and is distinct from $C_{\ell'} \setminus T$ for all other $\ell' \in [h-h']$. Thus by Proposition~\ref{prop:cycle-ind}, we have that $\Sigma_H(C_1), \hdots, \Sigma_H(C_{h-h'}) \in \F_q^h$ are linearly independent.

Thus, there exists a linear map $\psi : \F_q^h \to \F_q^{h'}$ such that $\ker \psi = \operatorname{span}(\Sigma_H(C_1), \hdots, \Sigma_H(C_{h-h'}))$. Let $m' := m-2$ and $n' := n-h+h'-1$. Define $d^{i,j} \in \F^{h'}_q$ for all $(i,j) \in [m'] \times [n']$ as follows:
\begin{align}
    d^{i,j} := \psi(c^{i,j}).\label{eq:d}
\end{align}
Let $H' \in \F_q^{(m'+n') \times m'n'}$ be the parity check matrix for an $(m', n', a=1, b=1, h=h')$ grid code with global parity checks corresponding to $\{d^{i,j} : (i,j) \in [m'] \times [n']\}$.

We claim that $H'$ is MR. Consider any $h'$ cycles $C'_1, \hdots, C'_{h'} \subseteq [m'] \times [n']$ for which there is a spanning tree $T' \subseteq [m'] \times [n']$ such that $C'_{\ell'} \setminus T'$ \change{has a unique single edge} for all $\ell' \in [h']$.  By Proposition~\ref{prop:cycle-ind}, \change{it suffices to show} that $\Sigma_{H'}(C'_1), \hdots, \Sigma_{H'}(C'_{h'})$ are linearly independent. \change{B}y definition of $d^{i,j}$, we know that 
\[
\Sigma_{H'}(C'_i) = \psi(\Sigma_{H}(C'_i))
\] for $i\in [h']$.
Thus, since $\ker \psi = \operatorname{span}(\Sigma_H(C_1), \hdots, \Sigma_H(C_{h-h'}))$, it suffices to prove that $\Sigma_H(C_1), \hdots,$ $\Sigma_H(C_{h-h'}),$ $\Sigma_H(C'_1), \hdots, \Sigma_H(C'_{h'}) \in \F_q^h$ are linearly independent. Since $T$ and $T'$ are on disjoint sets of vertices, we have that $T \cup T'$ is acyclic and each of
\[
C_1 \setminus (T \cup T'), \cdots C_{h-h'} \setminus (T \cup T'), C'_1 \setminus (T \cup T'), \hdots, C'_{h'} \setminus (T \cup T')
\]
has cardinality one and is distinct. Thus by Proposition~\ref{prop:cycle-ind}, we have that $\Sigma_H(C_1), \hdots, \Sigma_H(C_{h-h'}),$ $\Sigma_H(C'_1), \hdots, \Sigma_H(C'_{h'})$ are linearly independent because $H$ is MR, as desired.
\end{proof}
By combining Theorem~\ref{thm:lb-general} with $h' = \min(\lfloor\sqrt{m-2}\rfloor+1, h)$ and Theorem~\ref{thm:monotone-h} (with Corollary~\ref{cor:mge2lb} for $h=2$), we get the following result.

\begin{corollary}\label{cor:combine}
For $n,m,h \ge 2$, $q(m,n,1,1,h) = \Omega_{m,h}(n^{\min(\lfloor\sqrt{m-2}\rfloor,h-1)})$.
\end{corollary}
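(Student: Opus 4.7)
The plan is to combine the two main ingredients, Theorem~\ref{thm:lb-general} and Theorem~\ref{thm:monotone-h}, exactly as the paper suggests. The idea is that although Theorem~\ref{thm:lb-general} requires $m \ge (h-1)^2$, i.e. $m$ large relative to $h$, the near-monotonicity result Theorem~\ref{thm:monotone-h} lets us trade two rows of the grid for a unit decrease in the number of global parities. So if $h$ is larger than $\lfloor\sqrt{m-2}\rfloor+1$, we first reduce $h$ down to the largest value $h'$ for which the hypothesis $m-2 \ge (h'-1)^2$ holds, and only then invoke Theorem~\ref{thm:lb-general} on the shrunken grid.

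Concretely, set $h' := \min(\lfloor\sqrt{m-2}\rfloor+1,h)$, chosen so that $(h'-1)^2 \le m-2$. First I would apply Theorem~\ref{thm:monotone-h} once to obtain
\[
q(m,n,1,1,h) \;\ge\; q(m-2,\,n-h+h'-1,\,1,1,h').
\]
Since $h,h'$ depend only on $m,h$ (not $n$), the new second argument is $\Theta_{m,h}(n)$.

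Next I would split on the value of $h'$. When $h' \ge 3$, the pair $(m-2,h')$ satisfies the hypotheses of Theorem~\ref{thm:lb-general}, so that theorem yields
\[
q(m-2,\,n-h+h'-1,\,1,1,h') \;\ge\; \Omega\!\left(\frac{(n-h+h'-1)^{h'-1}}{(h')^{2h'-1}}\right) \;=\; \Omega_{m,h}\!\left(n^{h'-1}\right),
\]
which gives the desired bound since $h'-1 = \min(\lfloor\sqrt{m-2}\rfloor,h-1)$. When $h' = 2$ (which happens when $m \in \{3,4,5\}$ or $h = 2$), Theorem~\ref{thm:lb-general} is unavailable, but we can apply Theorem~\ref{thm:monotone-h} a second time with $h'' = 1$ and then invoke Corollary~\ref{cor:mge2lb} to get a linear-in-$n$ lower bound, which again matches $n^{h'-1} = n^1$. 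Finally, when $h' = 1$ we must have $m = 2$ and the claimed exponent is $0$, so the statement is vacuous.

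I do not anticipate any substantial obstacle here: the combinatorial heavy lifting is already done in Theorems~\ref{thm:lb-general} and~\ref{thm:monotone-h}, and the corollary is essentially bookkeeping. The only subtle point is checking that $h'$ is chosen so that both applications are valid simultaneously, namely that $(h'-1)^2 \le m-2$ when passing from $m$ to $m-2$ (this is automatic from the definition), and handling the boundary $h' = 2$ case where we must fall back on the one-global-parity lower bound rather than the main lower bound theorem.
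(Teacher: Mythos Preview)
Your proposal is correct and follows exactly the paper's route: reduce via Theorem~\ref{thm:monotone-h} to $h' = \min(\lfloor\sqrt{m-2}\rfloor+1,h)$, then invoke Theorem~\ref{thm:lb-general} when $h' \ge 3$, and fall back on Corollary~\ref{cor:mge2lb} when $h' = 2$. One small inefficiency worth fixing: in the $h' = 2$ branch you apply Theorem~\ref{thm:monotone-h} \emph{twice} (first to $h'=2$, then to $h''=1$), which costs four rows total and leaves $m-4 < 2$ precisely in the cases $m \in \{4,5\}$ you flagged; it is cleaner, and what the paper's one-line sketch intends, to apply Theorem~\ref{thm:monotone-h} just once with target $1$ directly, landing at $q(m-2,n-h,1,1,1)$ and then invoking Corollary~\ref{cor:mge2lb} for all $m \ge 4$.
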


\section{Conclusion}

In this paper, we proved a number of new results on the field size of MR $(m,n,1,1,h)$ grid codes. We hope the concrete constructions we present in this paper could potentially be of use in the future design of distributed storage systems.

We leave the following as an open question: what is the optimal field size of for an MR $(m,n,1,1,h)$ grid code when $m$ and $h$ are constant? In particular, when $h=1$, we give an upper bound of $n^{m-1}$, but we can only prove a lower bound of $n$. Which of these two bounds is closer to the correct answer? Decreasing the exponent (without increasing multiplicative factors) could also perhaps lead to more scenarios in which these codes are practically useful.

We also note that many open questions remain even in the regime for which $m$ grows as a function of $n$, particularly when $h \ge 2$. See the Appendix for more details.

\section*{Acknowledgments}

We thank Venkatesan Guruswami and Sergey Yekhanin for valuable comments on this manuscript. \change{We also thank anonymous reviewers for numerous suggestions improving the quality and correctness of this manuscript.}

\bibliographystyle{IEEEtran}
\bibliography{references}
\appendix

\iffalse
\subsection{MR tensor codes with global parities outside local parities}

\josh{Are we including this in the paper?}

The following theorem can be used to construct $(m,n,a,b,h)$-MR tensor codes where the $h$ global parties are outside local groups.
\begin{theorem}
	\label{thm:gabtrick_globalout}
	Suppose $C_0$ is $(n,k)$-code over $\F_q$.  Then there exists an (explicit) $(n+h,k)$-code $C_1$ over $\F_{q^{k+h}}$ such that $C_1$ projected to the first $n$ coordinates is $C_0$ and $C_1$ can correct any erasure pattern obtained by adding $h$ extra erasures to a correctable erasure pattern of $C_0.$
\end{theorem}
\begin{proof}
Suppose $G_0\in \F_q^{k\times n}$ be a generator matrix of $C_0$.
Let $M$ be a Gabidulin matrix of size $h \times (h+k)$ generated by some basis $\alpha_1,\alpha_2,\dots,\alpha_{h+k}$ in $\F_{q^{h+k}}.$ Let $\widetilde{M}$ be the $h \times k$ matrix remaining after doing row operations on $M$ to make the first $h$ columns into $I_{h\times h}$, i.e., $[I_h|\widetilde{M}]=M[[h]]^{-1}\cdot M$. Finally let $G_1=[G_0|\widetilde{M}^T]$ which is a $k\times (n+h)$ matrix obtained by appending the rows of $\widetilde{M}$ as columns into $G_0$. Then we claim that code $C_1$ generated by the matrix $G_1$ is the desired code.
\gopi{Complete the proof later. Basically follow from Gabidulin code property that right multiplication by an invertible $F_q$ matrix preserves its MDS nature. After that it is mostly arguing using Schur complements.}	
\end{proof}
\fi

Although our paper is focused on the regime in which $m$ is constant, our methods also produce some novel results in the regime in which $m$ is growing as a function of $n$ (including the well-studied case of $m=n$). We prove the following results.

First, the best lower bound on $q(n, n, 1, 1, h)$ due to Kane et al.~\cite{kane2019independence} is of the form $2^{\Omega(n/h)}$. Using the same proof strategy as Theorem~\ref{thm:monotone-h}, we can improve this bound to $2^{\Omega(n - \sqrt{h})}$ for any $h \ge 2$.

\begin{theorem}\label{thm:box-bound}
For any $m, n, h \ge 1$, let $h_1$ and $h_2$ positive integers such that $(h_1-1)(h_2-1) \ge h-h'$. We then have that
\[
    q(m, n, 1, 1, h) \ge q(m-h_1, n-h_2, 1, 1, h').
\]
In particular by setting $h'=1$ and $h_1 = h_2 = \lceil \sqrt{h-1}\rceil + 1$, we have that $q(n, n, 1, 1, h) \ge 2^{\Omega(n - \sqrt{h})}.$ 
\end{theorem}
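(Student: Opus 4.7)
The plan is to mimic the proof of Theorem~\ref{thm:monotone-h}, but instead of placing the ``spent'' cycles in a $2 \times (h - h' + 1)$ strip, we place them inside an $h_1 \times h_2$ corner subgrid
\[
S := \{m - h_1 + 1, \hdots, m\} \times \{n - h_2 + 1, \hdots, n\},
\]
which admits $(h_1 - 1)(h_2 - 1)$ independent fundamental cycles --- exactly what the hypothesis guarantees us to spend.

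Concretely, let $H \in \F_q^{(m+n+h-1)\times mn}$ be the parity check matrix of an MR $(m, n, 1, 1, h)$ grid code with vectors $c^{i,j} \in \F_q^h$. Choose the ``cross'' spanning tree $T_S$ of $S$ consisting of all edges incident to row $m-h_1+1$ together with all edges incident to column $n-h_2+1$. For each pair $(i,j)$ with $m-h_1+2 \le i \le m$ and $n-h_2+2 \le j \le n$, the non-tree edge $(i,j)$ has a fundamental $4$-cycle in $T_S$. This yields $(h_1-1)(h_2-1) \ge h - h'$ cycles; let $C_1, \hdots, C_{h-h'}$ be any $h-h'$ of them. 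Extending $T_S$ to a spanning tree of the full grid and invoking Proposition~\ref{prop:cycle-ind} (with the remaining $h'$ non-tree edges picked arbitrarily) shows that $\Sigma_H(C_1), \hdots, \Sigma_H(C_{h-h'})$ are linearly independent in $\F_q^h$. Pick a linear surjection $\psi : \F_q^h \to \F_q^{h'}$ with $\ker \psi = \Span(\Sigma_H(C_1), \hdots, \Sigma_H(C_{h-h'}))$, and define
\[
d^{i,j} := \psi(c^{i,j}) \in \F_q^{h'} \quad \text{for all } (i,j) \in [m-h_1] \times [n-h_2].
\]
Let $H'$ be the corresponding $(m-h_1, n-h_2, 1, 1, h')$ grid parity check matrix.

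The key verification is that $H'$ is MR. Fix a spanning tree $T'$ of $[m-h_1] \times [n-h_2]$ and any $h'$ non-tree edges producing cycles $C'_1, \hdots, C'_{h'}$; by Proposition~\ref{prop:cycle-ind}, it suffices to show $\Sigma_{H'}(C'_1), \hdots, \Sigma_{H'}(C'_{h'})$ are linearly independent. Since $\Sigma_{H'}(C'_\ell) = \psi(\Sigma_{H}(C'_\ell))$ and $\ker \psi$ is spanned by $\Sigma_H(C_1), \hdots, \Sigma_H(C_{h-h'})$, this is equivalent to the $h$ vectors
\[
\Sigma_H(C_1), \hdots, \Sigma_H(C_{h-h'}), \Sigma_H(C'_1), \hdots, \Sigma_H(C'_{h'})
\]
being linearly independent in $\F_q^h$. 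To apply Proposition~\ref{prop:cycle-ind} to the original MR code $H$, I will form a single spanning tree $T$ of $[m] \times [n]$ as $T := T_S \cup T' \cup \{e\}$, where $e$ is a bridge edge between an unused row of $S$ and an unused column of the main region (e.g., $e = (m, 1)$). Since $T_S$ covers the vertices of $S$ and $T'$ covers the remaining vertices, this is a spanning tree with $|T_S| + |T'| + 1 = (h_1 + h_2 - 1) + ((m-h_1) + (n - h_2) - 1) + 1 = m+n-1$ edges. Each $C_\ell$ is supported on $T_S \cup S \subseteq T \cup S$ and uses exactly one non-tree edge (inside $S$), and similarly each $C'_\ell$ uses exactly one non-tree edge (inside the main region), with all $h$ non-tree edges distinct. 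The MR property of $H$ then delivers the desired independence.

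The main obstacle -- and the only genuinely new ingredient beyond Theorem~\ref{thm:monotone-h} -- is verifying that $S$ really does supply $(h_1-1)(h_2-1)$ cycle sums whose span has dimension $h-h'$. This follows simultaneously from a standard cycle-space count for the bipartite graph on $h_1 + h_2$ vertices and from MR-ness of $H$; the latter in fact forces these sums to be linearly independent (not merely dimension-$(h-h')$ spanning). Once Theorem~\ref{thm:box-bound} is established, the corollary for $q(n,n,1,1,h)$ is immediate: take $h' = 1$ and $h_1 = h_2 = \lceil\sqrt{h-1}\rceil + 1$, so that $(h_1-1)(h_2-1) \ge h-1$, and combine with the Kane et al.\ bound $q(n', n', 1, 1, 1) \ge 2^{n'/2 - 2}$ applied with $n' = n - \lceil\sqrt{h-1}\rceil - 1$.
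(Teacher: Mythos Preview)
Your proposal is correct and follows essentially the same approach as the paper's proof: place $h-h'$ fundamental cycles in an $h_1 \times h_2$ corner block, quotient by their cycle sums via a linear map $\psi$, and verify the induced code on $[m-h_1]\times[n-h_2]$ is MR by combining the corner tree with the small-grid spanning tree and invoking Proposition~\ref{prop:cycle-ind} on the original MR code. Your version is slightly more explicit (the ``cross'' tree $T_S$ and the bridge edge $e=(m,1)$), whereas the paper simply notes that $T\cup T'$ is acyclic with each cycle contributing a unique leftover edge; both arguments are equivalent once one extends the acyclic forest to a full spanning tree.
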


This loss of $\sqrt{h}$ (essentially) tightly matches the extreme parameter setting of $h=(m-1)(n-1)$ in which case a trivial construction exists over field size $2$.

\begin{proof}[\change{Proof of Theorem~\ref{thm:box-bound}}]
Since the proof of Theorem~\ref{thm:box-bound} is very similar to that of Theorem~\ref{thm:monotone-h}, we only discuss the key changes to the proof.

Let $H$ be the parity check code of an MR $(m, n, 1, 1, h)$ grid code. Let $I := \{m-h_1+1, \hdots, m\}$ and $J := \{n-h_2+1, \hdots, n\}$. Let $T$ be a spanning tree of $I \times J$. Note that $|T| = h_1+h_2-1$, so $(I \times J) \setminus T$ has cardinality $(h_1-1)(h_2-1) \ge h-h'$. Thus, we can pick $h-h'$ cycles $C_1, \hdots, C_{h-h'} \subseteq I \times J$ such that for all $\ell \in [h-h']$, $C_\ell \setminus T$ has size one and a single edge. Thus by Proposition~\ref{prop:cycle-ind}, we have that $\Sigma_H(C_1), \hdots, \Sigma_H(C_{h-h'}) \in \F_q^h$ are linearly independent.

Let $m' := m-h_1$ and $n' := n-h_2$. We can then define $d^{i,j} \in \F^{h'}_q$ for $(i,j) \in [m'] \times [n']$ precisely as in (\ref{eq:d}). To show that the $d^{i,j}$'s form the global parity check of a MR $(m', n', 1, 1, h')$ grid code, consider any $h'$ cycles $C'_1, \hdots, C'_{h'} \subseteq [m'] \times [n']$ for which there is a spanning tree $T' \subseteq [m] \times [n]$ such that $C'_{\ell'} \setminus T$ is a unique singleton for all $\ell' \in [h']$.  It then suffices to prove that  $\Sigma_H(C_1), \hdots, \Sigma_H(C_{h-h'}),$ $\Sigma_H(C'_1), \hdots, \Sigma_H(C'_{h'}) \in \F_q^h$ are linearly independent. This follows from noting that deleting $T \cup T'$ leaves a single distinct edge form each of the $h$ cycles.

Thus, an MR $(m', n', 1, 1, h')$ grid code exists over $\F_q$, implying the lower bound. The asymptotic bound $q(n, n, 1, 1, h) \ge 2^{\Omega(n - \sqrt{h})}$ immediately follows by composing this reduction with the bound of Kane et al.~\cite{kane2019independence}.
\end{proof}

We also give a new construction. In the setting of $h=1$, recall that Kane et al.~\cite{kane2019independence} shows that $q(m, n, 1, 1, 1) \le 8^{\max(m,n)}$, when $\max(m, n)$ is a power of two. Furthermore, our Theorem~\ref{thm:h1m-1} shows that $q(m, n, 1, 1, 1) \le n^{m-1}$, when $n$ is a power of two. We interpolate these constructions as follows.

\begin{theorem}\label{thm:h1-bootstrap}
Assume that $m \le n$ are powers of two, and there exists an MR $(m, m, a=1, b=1, h=1)$ grid code over field size $q_m$, where $q_m$ is a power of two. Then, there exists an MR $(m, n, a=1, b=1, h=1)$ grid code over field size $q = (n/m)^{m-1} q_m$. In particular, $q(m, n, 1, 1, 1) \le 8(8n/m)^{m-1}$.
\end{theorem}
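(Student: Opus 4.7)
The plan is to construct an MR labeling $\gamma:[m]\times[n]\to\F_q$ by combining the binary-encoding trick of Theorem~\ref{thm:h1m-1} with the hypothesized inner MR code. I would partition the right vertices $[n]$ into $n/m$ consecutive blocks $B_1,\ldots,B_{n/m}$ of size $m$; let $k(j)\in[n/m]$ denote the block index of $j$ and $\pi(j)\in[m]$ its position inside that block. Setting $d:=(m-1)\log_2(n/m)$ and identifying $\F_q$ additively with $\F_2^d\oplus\F_{q_m}$ (which makes $q=(n/m)^{m-1}q_m$ a power of two), I would define a Gabidulin labeling $\gamma(i,j)=(u(i,j),v(i,j))$ by taking $v(i,j):=\gamma_m(i,\pi(j))$ from the given inner MR $(m,m,1,1,1)$ code, and taking $u(i,j)$ to be the Theorem~\ref{thm:h1m-1}-style encoding of the block index $k(j)$: for $i\le m-1$, the $\log_2(n/m)$ coordinates of $u(i,j)$ indexed by the $i$-th slot hold the binary expansion of $k(j)$, and every remaining coordinate (including all of $u(m,j)$) is zero.

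By Corollary~\ref{cor:cycle-ind}, and using that in characteristic two the cycle sum reduces to the XOR of edge labels, I only need to show $\gamma(C)\neq 0$ for every simple cycle $C\subseteq[m]\times[n]$. I would split on whether the right vertices of $C$ all lie in one block. In the single-block case, say $B_k$, the map $\pi$ is injective on the right vertices of $C$, so $\pi(C)$ is a simple cycle inside $[m]\times[m]$, and the MR hypothesis on $\gamma_m$ immediately gives $v(C)=\gamma_m(\pi(C))\neq 0$, hence $\gamma(C)\neq 0$.

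The more delicate case is when the right vertices of $C$ span at least two blocks; here the goal is $u(C)\neq 0$. Writing the cyclic traversal of $C$ as right vertices $j_{l_1},\ldots,j_{l_s}$ interleaved with left vertices $i_{t_1},\ldots,i_{t_s}$ (so that $i_{t_r}$ sits between $j_{l_{r-1}}$ and $j_{l_r}$ cyclically), the coordinates of $u(C)$ attached to a particular left vertex $i\le m-1$ contribute $b(k(j_{l_{r-1}}))\oplus b(k(j_{l_r}))$ at the unique step where $i_{t_r}=i$ (and zero otherwise). So $u(C)=0$ would force $k(j_{l_{r-1}})=k(j_{l_r})$ at every step whose intervening left vertex lies in $[m-1]$, meaning the block index can only change across the left vertex $m$. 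But in a simple cycle $m$ appears at most once, and a cyclic sequence that changes value at most once must in fact be constant --- contradicting the assumption that $C$'s right vertices span multiple blocks. Plugging the Kane et al.~\cite{kane2019independence} bound $q_m\le 8^m$ then yields the quantitative bound $q(m,n,1,1,1)\le 8^m(n/m)^{m-1}=8(8n/m)^{m-1}$.

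The main obstacle I expect is the cyclic-constancy argument in this second case, which pivots on the fact that the row $i=m$ is the unique row with identically zero $u$-label and is therefore the only place where a ``free'' block transition is permitted inside a cycle; the remaining steps are a routine adaptation of Theorem~\ref{thm:h1m-1}'s analysis combined with the MR hypothesis on the inner code.
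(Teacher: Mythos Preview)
Your proposal is correct and follows essentially the same approach as the paper's proof: the same two-component labeling (inner MR code on the within-block position, Theorem~\ref{thm:h1m-1}-style binary encoding of the block index with row $m$ zeroed out), and the same argument that a vanishing cycle sum forces the block index to be constant around the cycle because only the single occurrence of left vertex $m$ could permit a change. Your explicit cyclic-constancy phrasing is in fact slightly cleaner than the paper's more telegraphic ``repeating this argument for all other left vertices,'' but the content is identical.
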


\begin{proof}[\change{Proof of Theorem~\ref{thm:h1-bootstrap}}]
This construction is partially inspired by the recursive construction of Kane et al.~\cite{kane2019independence}.

Assume $q_m = 2^k$. Let $\gamma : [m] \times [m] \to \F_2^k$ be the labeling corresponding to an MR $(m, m, a=1, b=1, h=1)$ grid code. Let $b = \log_2(m / n)$. We construct an extended map $\gamma' : [m] \times [n] \to \F_2^{k + b (m-1)}$ as follows.

Consider any $i \in [m]$ and $j \in [n]$, where we express $j = j'm + j''$ with $j' \in \{0, 1, \hdots, n/m - 1\}$ and $j'' \in [m]$. We set the first $k$ bits of $\gamma'(i, j)$ are equal to $\gamma(i, j'')$. For the remaining $b(m-1) = \log_2(n/m)(m-1)$ bits, partition these into $m-1$ blocks of length $b$. If $i \in [m-1]$, set the $i$th block equal to the binary representation of $j'$, with all other blocks equal to zero. If $i = m$, set all $m-1$ blocks equal to zero.

We claim this construction is MR $(m, n, a=1, b=1, h=1)$. To see why, assume for sake of contradiction that $C \subseteq [m] \times [n]$ is a simple cycle for which $\gamma'(C) = 0$. Consider any $i \in [m-1]$, for which there exist distinct $j_1, j_2 \in [n]$ for which $(i, j_1), (i, j_2) \in C$. Further assume that $j_1 = j'_1m + j_1''$ and $j_2 = j'_2m + j_2''$. Then, observe that $\gamma'(i, j_1)$ and $\gamma'(i, j_2)$ are the only edges of the cycle for which their labels which give the $i$th binary block a nonzero label. Thus, $\gamma(i, j_1)$ and $\gamma(i, j_2)$ are equal in that block. Therefore, the $j'_1 = j'_2$. Repeating this argument for all other left vertices in $C$ (except possibly $m$), we can deduce that $j'_1 = j'_2$ for all right vertices $j_1$ and $j_2$ of $C$. Thus, $C$ is a cycle contained within one of the $[m] \times [m]$ blocks of $[m] \times [n]$. However, the first $k$ bits of $\gamma'$ being $\gamma$ ensure that any such cycle has nonzero sum, a contradiction. Therefore, the construction is MR.
\end{proof}

A major question in this regime that we leave open is whether the upper bound of $n^{O(n)}$ (due to Gopalan et al.~\cite{gopalan2014explicit,Gopalan2016}) for the setting in which $m=n$ and $h \ge 2$ can be improved upon.

\end{document}